\newtheorem{theorem}{Theorem}
\newtheorem{corollary}[theorem]{Corollary}
\newtheorem{example}[theorem]{Example}
\newtheorem{question}{Question}
\newcommand{\Tr}{{\mathrm{Tr}}}
\newcommand{\gf}{ {{\mathrm{GF}}} }
\newcommand{\PG}{ {{\mathrm{PG}}}}
\newcommand{\hyperoval}{ {{\mathcal{O}}}}
\newcommand{\Segre}{ {{\mathrm{Segre}}} }
\newcommand{\Glynnone}{ {{\mathrm{Glynni}}} }
\newcommand{\Glynntwo}{ {{\mathrm{Glynnii}}} }
\newcommand{\Payne}{ {{\mathrm{Payne}}} }
\newcommand{\Trans}{ {{\mathrm{Trans}}} }
\newcommand{\Adelaide}{ {{\mathrm{Adelaide}}} }
\newcommand{\Cherowitzo}{ {{\mathrm{Cherowitzo}}} }
\newcommand{\Subiaco}{ {{\mathrm{Subiaco}}} }
\begin{document}

\begin{frontmatter}

%% Title, authors and addresses

%% use the tnoteref command within \title for footnotes;
%% use the tnotetext command for the associated footnote;
%% use the fnref command within \author or \address for footnotes;
%% use the fntext command for the associated footnote;
%% use the corref command within \author for corresponding author footnotes;
%% use the cortext command for the associated footnote;
%% use the ead command for the email address,
%% and the form \ead[url] for the home page:
%%
%% \title{Title\tnoteref{label1}}
%% \tnotetext[label1]{}
%% \author{Name\corref{cor1}\fnref{label2}}
%% \ead{email address}
%% \ead[url]{home page}
%% \fntext[label2]{}
%% \cortext[cor1]{}
%% \address{Address\fnref{label3}}
%% \fntext[label3]{}

\title{Five Constructions of Permutation Polynomials over $\gf(q^2)$\tnotetext[fn1]{C. Ding's research was supported by The Hong Kong Research Grants Council, Proj. No. 601013. P. Yuan's research was supported by the NSF of China (Grant No.
11271142).}
}
%% use optional labels to link authors explicitly to addresses:
%% \author[label1,label2]{<author name>}
%% \address[label1]{<address>}
%% \address[label2]{<address>}

\author[cding]{Cunsheng Ding}
\ead{cding@ust.hk}
\author[ypz]{Pingzhi Yuan}
\ead{mcsypz@mail.sysu.edu.cn}

%\cortext[ypz]{Corresponding author}
\address[cding]{Department of Computer Science
                                                  and Engineering, The Hong Kong University of Science and Technology,
                                                  Clear Water Bay, Kowloon, Hong Kong, China}
\address[ypz]{School of Mathematics, South China Normal University, Guangzhou 510631, China}

\begin{abstract}
Four recursive constructions of permutation polynomials over $\gf(q^2)$ with those over
$\gf(q)$ are developed and applied to a few famous classes of permutation polynomials. 
They produce infinitely many new permutation polynomials over $\gf(q^{2^\ell})$ for any 
positive integer $\ell$ with any given permutation polynomial over $\gf(q)$.  
A generic construction of permutation polynomials over $\gf(2^{2m})$ with o-polynomials
over $\gf(2^m)$ is also presented, and a number of new classes of permutation polynomials 
over $\gf(2^{2m})$ are obtained. 
\end{abstract}

\begin{keyword}
Hyperoval \sep permutation polynomials \sep o-polynomials.
%% PACS codes here, in the form: \PACS code \sep code

%% MSC codes here, in the form: \MSC code \sep code
%% or \MSC[2008] code \sep code (2000 is the default)
%\MSC  11C08 \sep 12E10

\end{keyword}

\end{frontmatter}

\section{Introduction}

It is well known that every function from $\gf(q)$ to $\gf(q)$ can be expressed as a polynomial over $\gf(q)$.
A polynomial $f \in \gf(q)[x]$ is called a \emph{permutation polynomial\index{permutation polynomial}} (PP) 
if the associated polynomial function $f: a \mapsto f(a)$ from $\gf(q)$ to $\gf(q)$ is a permutation of
$\gf(q)$.

Permutation polynomials have been a hot topic of study for many years, and have applications in coding theory 
\cite{Laigle,ST05}, cryptography \cite{Lidl85,LM84,Mullen89,RSA,SH98}, combinatorial designs \cite{DY06}, 
and other areas of mathematics and engineering. Permutation polynomials could have a huge impact in both 
theory and applications. For instance, the Dickson permutation polynomials of order five over $\gf(3^m)$, i.e., 
$D_5(x,a) = x^5 +ax^3-a^2x$, led to a 70-year research breakthrough in combinatorics \cite{DY06}, gave a 
family of perfect nonlinear functions for cryptography \cite{DY06}, generated good linear codes \cite{CDY,YCD06} 
for data communication and storage, and produced optimal signal sets for CDMA communications \cite{DY07}, 
to mention only a few applications of these permutation polynomials. A lot of progress on permutation polynomials 
has been made recently (see, for example, \cite{AGW,CHZ,Hou12,Houxd,HMSY,TZH,WY12,ZZH}, and the references 
therein). 

The objective of this paper is to present five constructions of permutation polynomials over $\gf(q^2)$ with 
those over $\gf(q)$. Four of them are recursive constructions, so that a sequence of permutation polynomials 
over $\gf(q^{2^\ell})$ can be constructed with only one permutation polynomial over $\gf(q)$, where $\ell$ 
is any positive integer. Another construction of permutation polynomials over $\gf(2^{2m})$ with o-polynomials 
over $\gf(2^m)$ is also developed. This nonrecursive construction gives several classes of new permutation 
polynomials over $\gf(2^{2m})$ with known classes of o-polynomials over $\gf(2^m)$.

\section{Some preparations}

Before presenting the five constructions of PPs over $\gf(q^2)$ from those over $\gf(q)$, 
we need to prove a few basic results as a preparation for subsequent sections.

\subsection{A general theorem about permutation polynomials}

The following is a fundamental result of this paper.

\begin{theorem}\label{main-thm2.1}
Let $\{\beta_1, \ldots, \beta_m\}$ and $\{\gamma_1, \ldots, \gamma_m\}$ be two bases of $\gf(q^m)$ over $\gf(q)$.
Let $A$ be an $m \times m$ nonsingular matrix over $\gf(q)$ and
$f_i(x)\in\gf(q)[x]$ for $i=1, \ldots, m$. Then
$$F(z)=\gamma_1f_1(x_1)+\ldots+ \gamma_mf_m(x_m),$$
where $(x_1, \ldots, x_m)=(z_1, \ldots, z_m)A$ and $z=\beta_1z_1+ \ldots +\beta_mz_m$, is a PP over $\gf(q^m)$ if and only if  all $f_i(x), i=1, \ldots, m$, are PPs over $\gf(q)$.
\end{theorem}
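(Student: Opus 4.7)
The plan is to show that $F$ factors as a composition of four maps, three of which are manifestly bijections (coordinate representation with respect to a basis, right multiplication by a nonsingular matrix, and a linear combination with respect to a basis), and the remaining one is a product of the $f_i$ acting on $\gf(q)^m$ coordinatewise; then $F$ is a permutation iff this middle map is, iff each $f_i$ is a PP of $\gf(q)$.

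More concretely, I would introduce the four maps
\[
\varphi_1 : \gf(q^m) \to \gf(q)^m, \quad z \mapsto (z_1,\ldots,z_m),
\]
defined by the unique expansion $z = \beta_1 z_1 + \cdots + \beta_m z_m$ with respect to the basis $\{\beta_i\}$;
\[
\varphi_2 : \gf(q)^m \to \gf(q)^m, \quad (z_1,\ldots,z_m) \mapsto (z_1,\ldots,z_m)A;
\]
\[
\Phi : \gf(q)^m \to \gf(q)^m, \quad (x_1,\ldots,x_m) \mapsto (f_1(x_1),\ldots,f_m(x_m));
\]
\[
\varphi_3 : \gf(q)^m \to \gf(q^m), \quad (y_1,\ldots,y_m) \mapsto \gamma_1 y_1 + \cdots + \gamma_m y_m.
\]
By the definition of $F$, we have $F = \varphi_3 \circ \Phi \circ \varphi_2 \circ \varphi_1$. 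The maps $\varphi_1$ and $\varphi_3$ are bijections because $\{\beta_i\}$ and $\{\gamma_i\}$ are bases of $\gf(q^m)$ over $\gf(q)$, and $\varphi_2$ is a bijection because $A$ is nonsingular over $\gf(q)$. Consequently, $F$ is a permutation of $\gf(q^m)$ if and only if $\Phi$ is a permutation of $\gf(q)^m$.

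The remaining step is the elementary observation that the product map $\Phi$ permutes $\gf(q)^m$ iff every factor $f_i$ permutes $\gf(q)$: if each $f_i$ is a PP of $\gf(q)$, then $\Phi$ has a two-sided inverse given by componentwise application of $f_i^{-1}$; conversely, if some $f_{i_0}$ is not a permutation of $\gf(q)$, fixing any values for the other coordinates shows that $\Phi$ fails to be injective (or surjective) on the corresponding slice of $\gf(q)^m$, and hence on $\gf(q)^m$ itself.

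I do not anticipate a serious obstacle: the entire content is that the auxiliary data $\{\beta_i\}$, $\{\gamma_i\}$, $A$ merely relabel the domain and codomain of a product of univariate maps. The only care needed is to make the factorization $F = \varphi_3 \circ \Phi \circ \varphi_2 \circ \varphi_1$ explicit so that the ``if and only if'' is visibly inherited from the componentwise statement about $\Phi$.
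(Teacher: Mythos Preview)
Your proof is correct and follows essentially the same approach as the paper: the paper carries out the injectivity argument by hand (using the $\gamma_i$-basis to separate coordinates, the PP property of each $f_i$, and the nonsingularity of $A$, and then a contrapositive for the converse), which is exactly what your factorization $F=\varphi_3\circ\Phi\circ\varphi_2\circ\varphi_1$ encodes in a more structural form. There is no missing step.
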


\begin{proof}
We first prove the sufficiency of the conditions and  assume now that all $f_i(x), i=1, \ldots, m$, are PPs over $\gf(q)$.
Let
$$z=\beta_1z_1+ \ldots +\beta_mz_m,  \ \ y=\beta_1y_1 + \ldots + \beta_my_m, \ \ z_i, y_i \in \gf(q),
$$
be two elements of $\gf(q^m)$ such that $F(z)=F(y)$. Let
$$(x_1, \ldots, x_m)=(z_1, \ldots, z_m)A$$
and
$$(u_1, \ldots, u_m)=(y_1, \ldots, y_m)A.$$
Then
$$\gamma_1f_1(x_1)+\ldots + \gamma_mf_m(x_m)=\gamma_1f_1(u_1)+\ldots + \gamma_mf_m(u_m).$$
Since $\{\gamma_1, \ldots, \gamma_m\}$ is a basis of $\gf(q^m)$ over $\gf(q)$, we deduce that $f_i(x_i)=f_i(u_i)$
for $i=1, \ldots, m$. Note that all $f_i(x), i=1, \ldots, m,$ are PPs over $\gf(q)$. We obtian that $x_i=u_i$,
and thus $z_i=y_i$, as $A$ is an $m \times m$ nonsingular matrix over $\gf(q)$. Therefore $z=y$, and hence $F(z)$ is a PP over $\gf(q^m)$.

We now prove the necessity of the conditions, and assume that $F(z)$ is a PP over $\gf(q^m)$.  Suppose on the contrary that some $f_i(x)$ is not PP over $\gf(q)$.
Without loss of generality, we may assume that $f_1(x)$ is not a PP over $\gf(q)$. Then there exist two distinct elements $x_1, u_1\in\gf(q)$ such that $x_1\ne u_1$ and $f_1(x_1)=f_1(u_1)$. Let
$$(y_1, \ldots, y_m)=(u_1, 0, \ldots, 0)A^{-1}, \quad (z_1, \ldots, z_m)=(x_1, 0, \ldots, 0)A^{-1}.$$
Then $(y_1, \ldots, y_m)\ne(z_1, \ldots, z_m)$ and $y \ne z$, since $A$ is an $m \times m$ nonsingular matrix over $\gf(q)$. Hence $F(z)=F(y)$ with $y \ne z$, which contradicts to our assumption that $F(z)$ is a PP over $\gf(q^m)$.
\end{proof}

\subsection{$\gf(q^2)$ as a two-dimensional space over $\gf(q)$}

In this subsection, we represent every element of $\gf(q^2)$ as a pair of elements in $\gf(q)$ 
with respect to a basis of  $\gf(q^2)$ over $\gf(q)$.

Let $r=q^2$, where $q$ is a power of a prime. Let $\beta$ be a generator of $\gf(r)$. Then $\{1, \beta\}$ is a basis of
$\gf(r)$ over $\gf(q)$, and any element $z \in \gf(r)$ can be expressed as
$$
z=x+y\beta,
$$
where
\begin{eqnarray}\label{eqn-basisexpression}
x=\frac{\beta^q z - \beta z^q}{\beta^q-\beta} \in \gf(q), \ \
y=\frac{z^q-z}{\beta^q-\beta} \in \gf(q).
\end{eqnarray}

\section{A construction of permutation polynomials over $\gf(q^2)$ with o-polynomials over $\gf(q)$}

Throughout this section, let $q=2^m$, where $m>1$ is a positive integer.
A permutation polynomial $f$ on $\gf(q)$ is called an \emph{o-polynomial\index{o-polynomial}} if
$f(0)=0$, and for each $s \in \gf(q)$,
\begin{eqnarray}
f_s(x)=(f(x+s)+f(s))x^{q-2}
\end{eqnarray}
is a permutation polynomial. In the original definition of o-polynomials, it is required that $f(1)=1$.
However, this is not essential, as one can always normalise $f(x)$ by using $f(1)^{-1} f(x)$ due to
that $f(1) \neq 0$.

A \emph{hyperoval} of the projective plane $\PG(2, q )$ is a set of $q + 2$ points such
that no three of them are collinear. Any hyperoval in $\PG(2, q)$ can be 
written as 
\begin{eqnarray}\label{eqn-hyperoval}
\hyperoval_f=\{(1, t, f(t)): t \in \gf(q)\} \cup \{(0,1,0), (0,0,1)\}
\end{eqnarray}
where $f$ is a o-polynomial over $\gf(q)$ with $f(0)=0$ and $f(1)=1$.  Two o-polynomials are called 
{\em equivalent} if their hyperovals are equivalent. 

In this section, we will present a construction of PPs over $\gf(q^2)$ with o-polynomials over $\gf(q)$
and will give a summary of known  o-polynomials over $\gf(q)$.

\subsection{The construction}\label{sec-constwithopoly}

Let $f$ be a polynomial over $\gf(q)$ and let $\beta$ be a generator of $\gf(q^2)^*$. We now define
a polynomial $F(z)$ over $\gf(q^2)$ by
\begin{eqnarray}\label{eqn-bignewF}
F(z) &=& xf(yx^{q-2}) + \beta y + \left((z+z^q)^{q-1}+1\right)z \nonumber \\
       &=& \frac{\beta^q z+\beta z^q}{\beta^q+\beta} f \left[  \left(  \frac{\beta^q z+\beta z^q}{\beta^q+\beta} \right)
              \left( \frac{z+z^q}{\beta^q+\beta} \right)^{q-2}  \right]  + \left((z+z^q)^{q-1}+1\right)z,
\end{eqnarray}
where $z=x+\beta y$,  $x \in \gf(q)$, $y \in \gf(q)$ and $z$ are also related by the expression of (\ref{eqn-basisexpression}).

\begin{theorem}
If $f$ is an o-polynomial over $\gf(q)$, then the polynomial $F(x)$ of (\ref{eqn-bignewF}) is a PP over $\gf(q^2)$.
\end{theorem}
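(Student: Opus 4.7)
I would split $\gf(q^2)$ into the subfield $\gf(q)$ and its complement, and analyze $F$ on each. Writing $z = x + \beta y$ as in (\ref{eqn-basisexpression}), one has $z + z^q = y(\beta + \beta^q)$, and since $\beta \notin \gf(q)$ forces $\beta + \beta^q \in \gf(q)^*$, the quantity $(z+z^q)^{q-1}+1$ evaluates (in characteristic $2$) to $1$ when $y = 0$ and to $0$ when $y \ne 0$. So the seemingly mysterious third summand in (\ref{eqn-bignewF}) is just the indicator of the subfield $\gf(q)$.

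Using this, on $\gf(q)$ (i.e.\ $y = 0$) the first two summands vanish (because $yx^{q-2} = 0$ and $f(0) = 0$) while the third collapses to $z$, so $F$ fixes $\gf(q)$ pointwise. For $z = x + \beta y$ with $y \ne 0$ the indicator kills the third summand and $F(z) = xf(yx^{q-2}) + \beta y$; reading off the $\beta$-coefficient shows that $F$ preserves each horizontal fiber $\{x + \beta y_0 : x \in \gf(q)\}$ for $y_0 \in \gf(q)^*$, and the images of distinct fibers (together with $\gf(q)$ itself) are pairwise disjoint. The problem therefore reduces to showing that for every $y_0 \in \gf(q)^*$, the map $g_{y_0}(x) := x f(y_0 x^{q-2})$ permutes $\gf(q)$.

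The last step is this reduction. Clearly $g_{y_0}(0) = 0$, and $g_{y_0}(x) \ne 0$ for $x \ne 0$ since $f$ vanishes only at $0$. For $x \in \gf(q)^*$ the substitution $u = y_0/x$ is a bijection of $\gf(q)^*$ and transforms $g_{y_0}(x)$ into $y_0 \cdot f(u)/u$. Hence $g_{y_0}$ is a permutation iff $u \mapsto f(u)/u$ permutes $\gf(q)^*$; but that is precisely the $s=0$ case of the o-polynomial definition, since $f_0(x) = (f(x)+f(0))x^{q-2} = f(x)x^{q-2}$ is required to be a PP. No step is really the main obstacle --- the only subtleties are identifying the indicator role of $(z+z^q)^{q-1}+1$ and carrying out the substitution $u = y_0/x$; notably the argument uses only the $s=0$ portion of the o-polynomial hypothesis, so the conclusion holds under a strictly weaker assumption on $f$.
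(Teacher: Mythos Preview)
Your proof is correct and follows essentially the same route as the paper's: both identify $(z+z^q)^{q-1}+1$ as the indicator of $\gf(q)$, reduce injectivity to the fibers $y=y_0$, and then show that $x\mapsto xf(y_0x^{q-2})$ permutes $\gf(q)$ using only the $s=0$ case of the o-polynomial condition (the paper phrases this last step as a composition with the permutation $x\mapsto y_0x^{q-2}$, which is exactly your substitution $u=y_0/x$). Your closing remark that only the $s=0$ hypothesis is used matches the paper's own observation immediately after the proof.
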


\begin{proof}
Since $z+z^q=(\beta+\beta^q)y\in\gf(q)$ and $z+z^q=0$ if and only if $y=0$, so $(z+z^q)^{q-1}=1$
when $y\ne0$ and $0$ otherwise. Hence $F(z)=xf(yx^{q-2}) + \beta y$ when $y\ne0$, and $F(z)=x$ when $y=0$.

Let $z_1=x_1+\beta y_1$ and $z_2=x_2+\beta y_2$. Assume that $F(z_1)=F(z_2)$. We first consider the case
that $y_1=0$ and $y_2=0$. In this case, we have
$$F(z_1)=x_1=x_2=F(z_2).$$
Hence $x_1=x_2$, and then $z_1=z_2$. Therefore we may assume that $y_1\ne 0$. In this case, we have
 $$F(z_1)=x_1f(y_1x_1^{q-2}) + \beta y_1= F(z_2)=x_2f(y_2x_2^{q-2}) + \beta y_2.$$
Hence $y_2\ne0$ and $y_1=y_2=y$ since $\{1, \beta\}$ is a basis of $\gf(q^2)$ over $\gf(q)$. It follows that
\begin{equation}\label{o1}
x_1f(yx_1^{q-2})=x_2f(yx_2^{q-2}).
\end{equation}
Since $f$ is an o-polynomial over $\gf(q)$, by definition,  $f_0(x):=f(x)x^{q-2}$ is a permutation polynomial over $\gf(q)$. Note that $yx^{q-2}$ is also a  permutation polynomial over $\gf(q)$, which implies that $f_0(yx^{q-2})=y^{q-2}xf(yx^{q-2})$ is a  permutation polynomial over $\gf(q)$. Hence $xf(yx^{q-2})$ is a  permutation polynomial over $\gf(q)$. It follows from (\ref{o1}) that $x_1=x_2$,  and so $z_1=z_2$. This proves the theorem.
\end{proof}

In the construction of this section, the o-polynomial property of $f(x)$ is sufficient. But it is open if it is necessary.

\subsection{Known o-polynomials over $\gf(2^m)$}

To obtain permutation polynomials over $\gf(2^{2m})$ with the generic construction of Section \ref{sec-constwithopoly}, 
we need explicit o-polynomials over $\gf(2^m)$ as building blocks. O-polynomials have many special properties. For instance, 
the coefficient of each term of odd power in an o-polynomial is zero \cite{Okeefe}. 

In this subsection, we summarize and extend known 
o-polynomials over $\gf(2^m)$. We also introduce some basic results about o-polynomials.

\subsubsection{Basic properties of o-polynomials}

For any permutation polynomial $f(x)$ over $\gf(2^m)$, we define $\overline{f}(x)=xf(x^{2^m-2})$, and use $f^{-1}$ to denote the compositional inverse of $f$, i.e., $f^{-1}(f(x))=x$ for all $x \in \gf(2^m)$.

The following theorem introduces basic properties of o-polynomials whose proofs are easy \cite{Okeefe}. 

\begin{theorem}\label{thm-basicproperty}
Let $f$ be an o-polynomial on $\gf(2^m)$. Then the following holds:
\begin{enumerate}[a)]
\item $f^{-1}$ is also an o-polynomial;
\item $f(x^{2^{j}})^{2^{m-j}}$ is also an o-polynomial for any $1 \leq j \leq m-1$;
\item $\overline{f}$ is also an o-polynomial; and
\item $f(x+1)+f(1)$ is also an o-polynomial.
\end{enumerate}
\end{theorem}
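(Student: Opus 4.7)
The plan is to exploit the projective-geometric characterisation in (\ref{eqn-hyperoval}): a normalised $f$ with $f(0)=0$ is an o-polynomial if and only if the associated point set $\hyperoval_f$ is a hyperoval in $\PG(2,q)$. Since hyperovals are preserved both by projective transformations and by the coordinate-wise action of the Frobenius automorphism of $\gf(q)$ (a collineation), parts (a), (b) and (c) should each follow by exhibiting an explicit transformation that carries $\hyperoval_f$ to the hyperoval associated to the claimed o-polynomial. Part (d), in contrast, is cleanest to verify directly from the defining formula $f_s(x)=(f(x+s)+f(s))x^{q-2}$.

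For (a), I would apply the coordinate swap $(a,b,c)\mapsto(a,c,b)$: each affine point $(1,t,f(t))$ becomes $(1,f(t),t)$, and reparametrising by $s=f(t)$ produces the affine locus $\{(1,s,f^{-1}(s)):s\in\gf(q)\}$, while $(0,1,0)$ and $(0,0,1)$ are merely exchanged, so the image is $\hyperoval_{f^{-1}}$. For (b), I would apply the $2^{m-j}$-th power of the Frobenius coordinate-wise and reparametrise by $s=t^{2^{m-j}}$, so $t=s^{2^j}$; the third coordinate of an affine point becomes $f(s^{2^j})^{2^{m-j}}$, matching the claimed $g(s)$, while the two base points are fixed. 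For (c), I would use the swap $(a,b,c)\mapsto(b,a,c)$, which sends $(1,t,f(t))$ to $(t,1,f(t))$; rescaling the image with $t\neq 0$ by $t^{-1}$ yields $(1,u,uf(u^{q-2}))=(1,u,\overline{f}(u))$ with $u=t^{-1}$, the $t=0$ image $(0,1,0)$ plays the role of a new base point, and the old base points $(0,1,0),(0,0,1)$ map to $(1,0,0)=(1,0,\overline{f}(0))$ and $(0,0,1)$, so altogether one recovers exactly $\hyperoval_{\overline{f}}$.

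For (d), set $g(x)=f(x+1)+f(1)$. Then $g(0)=0$ is immediate, $g$ is a PP as a translate of $f$ plus a constant, and a direct substitution gives $g_t(x)=(f(x+t+1)+f(t+1))x^{q-2}=f_{t+1}(x)$, which is a PP for every $t\in\gf(q)$ by hypothesis on $f$; hence $g$ is an o-polynomial. The main obstacle I expect is the bookkeeping for part (c): one must track the $t=0$ affine point (which cannot be projectively rescaled) and the images of the two infinite base points under the swap, since precisely these three points must recombine to form the base points and the $s=0$ affine point of $\hyperoval_{\overline{f}}$. Once that accounting is done, everything reduces to a routine substitution using $u^{-1}=u^{q-2}$ for $u\neq 0$.
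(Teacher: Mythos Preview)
Your argument is correct. Note, however, that the paper does not actually supply a proof of this theorem: it simply states that the proofs are easy and cites O'Keefe and Penttila \cite{Okeefe}. There is therefore no in-paper proof to compare against.

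That said, your approach is precisely the standard one underlying the cited reference: parts (a)--(c) are handled by observing that $\hyperoval_f$ is carried to $\hyperoval_g$ (for the appropriate $g$) by a collineation of $\PG(2,q)$---a coordinate permutation in (a) and (c), the Frobenius automorphism in (b)---while part (d) is a one-line computation from the definition of $f_s$. Your bookkeeping for (c) is accurate: the $t=0$ affine point, the image $(1,0,0)$ of the old base point $(0,1,0)$, and the fixed point $(0,0,1)$ recombine exactly into the affine point $(1,0,\overline{f}(0))$ together with the two base points of $\hyperoval_{\overline{f}}$, using $\overline{f}(0)=0$. Nothing is missing.
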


Although these o-polynomials are equivalent to the original o-polynomial $f$ in terms of their hyperovals, 
they may produce different objects when they are used in other applications. 

All o-monomials are characterised by the following theorem \cite[Corollary 8.2.4]{Hirsh}.  

\begin{theorem}
The monomial $x^k$ is an o-polynomial over $\gf(2^m)$ if and only if 
\begin{enumerate}[a)]
\item $\gcd(k, 2^m-1)=1$; 
\item $\gcd(k-1, 2^m-2)=1$; and 
\item $((x+1)^k+1)x^{2^m-2}$ is a permutation polynomial over $\gf(2^m)$. 
\end{enumerate}
\end{theorem}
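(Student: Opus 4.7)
The plan is to apply the definition of o-polynomial directly to $f(x)=x^k$ and isolate, one at a time, the contributions coming from the different values of the parameter $s\in\gf(2^m)$. Recall that $f$ must itself be a permutation of $\gf(q)$ (with $q=2^m$) and, in addition, $f_s(x)=(f(x+s)+f(s))x^{q-2}$ must be a permutation for every $s$. For the monomial $f(x)=x^k$, the requirement that $f$ itself be a PP is equivalent to $\gcd(k,2^m-1)=1$, which is exactly condition (a).

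The heart of the argument is the treatment of $s\ne 0$. I would substitute $x=sy$ and exploit $s^{q-1}=1$ to compute
$$
f_s(sy) \;=\; \bigl((sy+s)^k+s^k\bigr)(sy)^{q-2} \;=\; s^{k-1}\bigl((y+1)^k+1\bigr)y^{q-2} \;=\; s^{k-1}f_1(y).
$$
Because $y\mapsto sy$ is a bijection of $\gf(q)$ and $s^{k-1}\in\gf(q)^*$ is a nonzero scalar, $f_s$ is a PP if and only if $f_1$ is a PP. Since $f_1(x)=((x+1)^k+1)x^{q-2}$ is precisely the polynomial in condition (c), every nonzero value of $s$ is handled uniformly by (c).

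Finally, it remains to treat $s=0$. Here $f_0(x)=x^{k+q-2}$, a polynomial that fixes $0$ and agrees with $x\mapsto x^{k-1}$ on $\gf(q)^*$; this is a permutation precisely when $k-1$ satisfies the coprimality requirement recorded in condition (b). Assembling the three cases, $x^k$ is an o-polynomial if and only if (a), (b) and (c) all hold, so necessity and sufficiency follow together from the case analysis. The only delicate point is the scaling manipulation above; once one sees that multiplication by the nonzero constant $s^{k-1}$ does not disturb the PP check, the remaining verifications are essentially bookkeeping.
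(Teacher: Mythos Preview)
Your approach is the standard one and is essentially correct. Note, however, that the paper does not supply its own proof of this theorem; it simply cites Hirschfeld's book, so there is no authorial argument to compare against.

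There is one genuine issue. In handling $s=0$ you correctly observe that $f_0(x)=x^{k+q-2}$ fixes $0$ and agrees with $x\mapsto x^{k-1}$ on $\gf(q)^*$, but you then say this ``is a permutation precisely when $k-1$ satisfies the coprimality requirement recorded in condition~(b)'' without writing that requirement down. If you do, you get $\gcd(k-1,\,q-1)=\gcd(k-1,\,2^m-1)=1$, whereas condition~(b) as printed reads $\gcd(k-1,\,2^m-2)=1$. These are different conditions. For example, the translation o-monomial $x^4$ over $\gf(8)$ has $\gcd(3,6)=3$, and the Segre o-monomial $x^6$ over $\gf(32)$ has $\gcd(5,30)=5$; in both cases the printed condition~(b) fails even though the monomial is an o-polynomial. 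The stated condition~(b) is evidently a typographical slip (the correct modulus is $2^m-1$, as in Hirschfeld), and your argument in fact proves the corrected statement. You should write out explicitly the coprimality condition you derive rather than deferring to the label~(b); doing so would have exposed the misprint.
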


For o-monomials we have the following fundamental result \cite{Hirsh}. 

\begin{theorem}\label{thm-basicproperty2}
Let $x^k$ be an o-polynomial on $\gf(2^m)$. Then every polynomial in
$$\left\{x^{\frac{1}{k}},\, x^{1-k},\,  x^{\frac{1}{1-k}},\, x^{\frac{k}{k-1}},\, x^{\frac{k-1}{k}}\right\}$$
is also an o-monomial, where $1/k$ denotes the multiplicative inverse of $k$ modulo $2^m-1$.
\end{theorem}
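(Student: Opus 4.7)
The key observation is that the five exponents displayed in the theorem, together with $k$ itself, form the orbit of $k$ under the natural action on $\mathbb{Z}/(2^m-1)\mathbb{Z}$ of the order-six group generated by the two involutions $\sigma:t \mapsto t^{-1}$ and $\tau: t \mapsto 1-t$. Indeed, a direct computation gives
$$
\left\{k,\ \sigma(k),\ \tau(k),\ \sigma\tau(k),\ \tau\sigma(k),\ \sigma\tau\sigma(k)\right\} = \left\{k,\ \tfrac{1}{k},\ 1-k,\ \tfrac{1}{1-k},\ \tfrac{k-1}{k},\ \tfrac{k}{k-1}\right\}.
$$
Hence it suffices to prove that each of $\sigma$ and $\tau$, viewed at the level of exponents, preserves o-monomiality; iterating the two closure properties then walks through the whole orbit.

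The operation $\sigma$ is handled by part (a) of Theorem \ref{thm-basicproperty}: since $x^k$ is an o-polynomial we have $\gcd(k, 2^m-1)=1$, and so the compositional inverse of $x\mapsto x^k$ on $\gf(2^m)$ is exactly the monomial $x^{k^{-1}}$ with $k^{-1}$ read modulo $2^m-1$. The operation $\tau$ is handled by part (c): the bar polynomial $\overline{x^k}(x)=x\cdot(x^{2^m-2})^k$ agrees on $\gf(2^m)$ with $x^{1-k}$, as one sees after reducing $k(2^m-2)+1$ modulo $2^m-1$ and noting that both expressions send $0$ to $0$. Thus $x^{1-k}$ is an o-monomial.

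Composing the two operations then produces the remaining four exponents. For example, to reach $k/(k-1)$ one applies $\sigma$, $\tau$, and then $\sigma$ again: $x^k \mapsto x^{1/k} \mapsto x^{(k-1)/k} \mapsto x^{k/(k-1)}$, each step being an application of Theorem \ref{thm-basicproperty}(a) or (c) to an o-monomial already produced at the previous step. The main thing to watch is that each intermediate exponent be coprime to $2^m-1$, so that $\sigma$ really is defined on it; but this is supplied for free, because the previous step already exhibits the corresponding monomial as a permutation of $\gf(2^m)$. The only other bookkeeping point is the distinction between a formal polynomial and the function it induces on $\gf(2^m)$, which is harmless since every monomial in play sends $0$ to $0$, so reducing exponents modulo $2^m-1$ on $\gf(2^m)^*$ introduces no error. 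Beyond this mild bookkeeping I do not anticipate any real obstacle.
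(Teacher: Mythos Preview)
Your argument is correct. The paper itself does not supply a proof of this theorem; it simply attributes the result to Hirschfeld's book and moves on. Your derivation from parts (a) and (c) of Theorem~\ref{thm-basicproperty} is exactly the standard route: for a monomial $f(x)=x^{k}$ one has $f^{-1}(x)=x^{1/k}$ and $\overline{f}(x)=x\cdot x^{k(2^{m}-2)}$, which agrees with $x^{1-k}$ as a function on $\gf(2^{m})$, so the two involutions $k\mapsto 1/k$ and $k\mapsto 1-k$ each preserve o-monomiality, and the six-element orbit under the dihedral group they generate is precisely the set in the statement. Your handling of the two bookkeeping points (coprimality of intermediate exponents, and reduction of exponents modulo $2^{m}-1$) is also fine, since every intermediate monomial is already known to permute $\gf(2^{m})$ and all of them fix $0$.
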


\subsubsection{The translation o-polynomials}

The translation o-polynomials are described in the following theorem \cite{Segre57}.

\begin{theorem}
$\Trans(x)=x^{2^h}$ is an o-polynomial on $\gf(2^m)$, where $\gcd(h, m)=1$.
\end{theorem}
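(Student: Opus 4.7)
The plan is to check the definition of o-polynomial directly, exploiting the fact that raising to a power of two is an additive (Frobenius) map in characteristic two. First I would verify that $f(0)=0$ for $f(x)=x^{2^h}$ and that $f$ itself permutes $\gf(2^m)$, the latter because $\gcd(2^h, 2^m-1)=1$ trivially.

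Next I would fix an arbitrary $s \in \gf(2^m)$ and compute
\[
f(x+s)+f(s) = (x+s)^{2^h} + s^{2^h} = x^{2^h} + s^{2^h} + s^{2^h} = x^{2^h},
\]
using the Frobenius identity valid in characteristic two. Therefore
\[
f_s(x) = (f(x+s)+f(s))\,x^{q-2} = x^{2^h}\, x^{q-2},
\]
which evaluates to $0$ at $x=0$ and to $x^{2^h-1}$ on $\gf(2^m)^*$ (since $x^{q-2}=x^{-1}$ for $x \neq 0$). Thus $f_s$ is, as a function on $\gf(2^m)$, the map $x \mapsto x^{2^h-1}$.

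Finally I would show $x \mapsto x^{2^h-1}$ is a permutation of $\gf(2^m)$ by verifying $\gcd(2^h-1, 2^m-1)=1$. This follows from the standard identity $\gcd(2^h-1, 2^m-1) = 2^{\gcd(h,m)}-1$, together with the assumption $\gcd(h,m)=1$. There is no real obstacle here: the entire argument is a one-line Frobenius simplification of $f_s$ followed by a greatest-common-divisor computation, and the hypothesis $\gcd(h,m)=1$ is exactly what is needed to clinch the final step.
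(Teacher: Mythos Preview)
Your argument is correct. The paper itself does not supply a proof of this statement; it merely attributes the result to Segre \cite{Segre57} and then records some consequences. Your direct verification via the definition is the standard one: the Frobenius additivity collapses $f_s(x)$ to $x^{2^h}x^{q-2}$, which agrees with $x^{2^h-1}$ on all of $\gf(2^m)$ (including $x=0$, since $2^h-1\ge 1$), and the identity $\gcd(2^h-1,2^m-1)=2^{\gcd(h,m)}-1$ together with the hypothesis $\gcd(h,m)=1$ finishes the job. There is nothing to compare against in the paper, and no gap in what you wrote.
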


The following is a list of known properties of translation o-polynomials.

\begin{enumerate}[1)]
\item $\Trans^{-1}(x)=x^{2^{m-h}}$ and
\item $\overline{\Trans}(x)= xf(x^{2^m-2})=x^{2^m-2^{m-h}}$.
\end{enumerate}

\subsubsection{The Segre and Glynn o-polynomials}

The following theorem describes a class of o-polynomials, which are an extension of the original Segre
o-polynomials.

\begin{theorem}
Let $m$ be odd. Then
$\Segre_a(x)=x^{6}+ax^4+a^2x^2$ is an o-polynomial on $\gf(2^m)$ for every $a \in \gf(2^m)$.
\end{theorem}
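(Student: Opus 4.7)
The plan is to recognise $\Segre_a$ as a translate of the classical Segre o-polynomial $g(x)=x^{6}$ and then invoke the translation invariance of the o-polynomial property. Setting $t=a^{2^{m-1}}$ so that $t^{2}=a$ and $t^{4}=a^{2}$, a direct expansion in characteristic $2$ yields
\begin{equation*}
g(x+t)+g(t)=(x+t)^{6}+t^{6}=x^{6}+t^{2}x^{4}+t^{4}x^{2}=x^{6}+ax^{4}+a^{2}x^{2}=\Segre_{a}(x).
\end{equation*}
Thus $\Segre_{a}(x)=g(x+t)+g(t)$, exactly the construction of part~(d) of Theorem~\ref{thm-basicproperty} but with an arbitrary $t\in\gf(2^{m})$ in place of $1$.

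I would next establish the corresponding general translation lemma: if $g$ is an o-polynomial on $\gf(2^{m})$ and $t\in\gf(2^{m})$, then $h(x):=g(x+t)+g(t)$ is again an o-polynomial. The verifications are routine. Clearly $h(0)=0$, and $h$ is a PP because it is the composition of the PPs $x\mapsto x+t$, $g$, and $y\mapsto y+g(t)$. Moreover, for every $s\in\gf(2^{m})$,
\begin{equation*}
h(x+s)+h(s)=g(x+s+t)+g(t)+g(s+t)+g(t)=g\bigl(x+(s+t)\bigr)+g(s+t),
\end{equation*}
so $h_{s}(x)=g_{s+t}(x)$, which is a PP because $g$ is an o-polynomial.

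Applying this lemma to $g(x)=x^{6}$ and $t=a^{2^{m-1}}$ then delivers the theorem, provided one knows that $g(x)=x^{6}$ is itself an o-polynomial on $\gf(2^{m})$ for $m$ odd. The main (and essentially only) obstacle is this last fact: it requires $\gcd(6,2^{m}-1)=1$, which is clear for odd $m$, together with the classical Segre result that $y^{5}+y^{3}+y$ permutes $\gf(2^{m})$ for $m$ odd. This is the $a=0$ special case of the statement and may be taken as a known classical fact; the content that the theorem adds for general $a\in\gf(2^{m})$ is furnished by the short algebraic identity above and the translation lemma.
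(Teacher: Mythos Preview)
Your proof is correct and follows exactly the same idea as the paper: the paper's proof consists of the single line $\Segre_a(x)=(x+\sqrt{a})^6+\text{const}$, which is precisely your identity $g(x+t)+g(t)$ with $g(x)=x^6$ and $t=\sqrt{a}$. You have simply made explicit the translation lemma that the paper leaves implicit.
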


\begin{proof}
The conclusion follows from
$$
\Segre_a(x)=(x+\sqrt{a})^6 + \sqrt{a}^3.
$$
\end{proof}

We have the following remarks on this family of o-polynomials.
\begin{enumerate}[a)]
\item $\Segre_0(x)=x^6$ is the original Segre o-polynomial \cite{Segre62,SegreBartocci}. So this is an extended family.
\item $\Segre_a(x)=xD_5(x, a)=a^2D_5(x^{2^m-2}, a^{2^m-2})x^7$, where $D_5(x, a)=x^5+ax^3+a^2x$,
          which is the Dickson polynomial of the first kind of order 5 \cite{LMT93}.
\item $\overline{\Segre}_a=D_5(x^{2^m-2}, a)=a^2x^{2^m-2}+ax^{2^m-4}+x^{2^m-6}$.
\item $\Segre_a^{-1}(x)=(x+\sqrt{a}^3)^{\frac{5\times 2^{m-1}-2}{3}} +\sqrt{a}$.
\end{enumerate}

The proof of the following theorem is straightforward and omitted. 

\begin{theorem}\label{thm-jan9}
Let $m$ be odd. Then
\begin{eqnarray}\label{eqn-PayneInverse2}
\overline{\Segre}_1^{-1}(x)=\left( D_{\frac{3 \times 2^{2m}-2}{5}}(x, 1)\right)^{2^m-2}.
\end{eqnarray}
\end{theorem}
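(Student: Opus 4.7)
The plan is to verify directly that $\overline{\Segre}_1\bigl((D_N(x,1))^{2^m-2}\bigr) = x$ for all $x \in \gf(2^m)$, where $N := (3\cdot 2^{2m}-2)/5$; since $\overline{\Segre}_1$ is a permutation of $\gf(2^m)$, this identifies $(D_N(x,1))^{2^m-2}$ with $\overline{\Segre}_1^{-1}(x)$. Note that $N$ is an integer precisely because $m$ odd forces $2^{2m} \equiv 4 \pmod{5}$. The starting point is remark (c) above, which gives $\overline{\Segre}_1(w) = w^{-5} + w^{-3} + w^{-1}$ for $w \in \gf(2^m)^*$; since $D_5(u,1) = u^5 + u^3 + u$ in characteristic $2$, this rewrites as $\overline{\Segre}_1(w) = D_5(w^{-1},1)$.

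I would then apply the composition law for Dickson polynomials, $D_k(D_l(x,1),1) = D_{kl}(x,1)$, which follows at once from $D_k(y+y^{-1},1) = y^k + y^{-k}$. On the set where $D_N(x,1) \neq 0$, so that $(D_N(x,1))^{2^m-2} = D_N(x,1)^{-1}$, this yields
$$
\overline{\Segre}_1\bigl(D_N(x,1)^{-1}\bigr) \;=\; D_5\bigl(D_N(x,1),\,1\bigr) \;=\; D_{5N}(x,1) \;=\; D_{3\cdot 2^{2m}-2}(x,1).
$$
Writing $x = y + y^{-1}$ for some $y \in \gf(2^{2m})^*$, and using $3\cdot 2^{2m}-2 \equiv 1 \pmod{2^{2m}-1}$ together with $y^{2^{2m}-1}=1$, the right-hand side collapses to $y+y^{-1} = x$.

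The main obstacle is the exceptional locus: to justify the rewriting $(D_N(x,1))^{2^m-2} = D_N(x,1)^{-1}$ on $\gf(2^m)^*$, I must show $D_N(x,1) \neq 0$ whenever $x \neq 0$. My approach is to observe $5N \equiv 1 \pmod{2^{2m}-1}$, which forces $\gcd(5N,\,2^{2m}-1)=1$; since $2^{2m}-1$ is odd with $5 \nmid 2^{2m}-1$ (again from $2^{2m}\equiv 4 \pmod{5}$), one upgrades this to $\gcd(2N,\,2^{2m}-1)=1$. Thus $D_N(y+y^{-1},1) = y^N + y^{-N} = 0$ forces $y^{2N}=1$, hence $y=1$ and $x=0$. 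Finally, at $x=0$ one verifies directly that $D_N(0,1)=1+1=0$ and $\overline{\Segre}_1(0)=0$, so $(D_N(0,1))^{2^m-2}=0 = \overline{\Segre}_1^{-1}(0)$, completing the identification.
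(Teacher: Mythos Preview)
Your proof is correct and follows essentially the approach the paper has in mind (the paper omits the proof as ``straightforward''): from remark (c), $\overline{\Segre}_1(x)=D_5(x^{2^m-2},1)$ is the composition of $x\mapsto x^{2^m-2}$ with $D_5(\cdot,1)$, so its inverse is obtained by composing the inverses in reverse order, using that $D_N(\cdot,1)$ inverts $D_5(\cdot,1)$ on $\gf(2^m)$ (since $5N\equiv 1\pmod{2^{2m}-1}$) and that $x\mapsto x^{2^m-2}$ is its own inverse. Your extra care with the locus $x=0$ and the verification that $D_N(x,1)\neq 0$ for $x\neq 0$ makes explicit the one point that a purely compositional argument glosses over (namely that $x\mapsto x^{2^m-2}$ is not literally $x\mapsto x^{-1}$ at $0$); this is a welcome clarification. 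One minor remark: in your gcd step, the observation $5\nmid 2^{2m}-1$ is unnecessary, since $\gcd(5N,2^{2m}-1)=1$ already forces $\gcd(N,2^{2m}-1)=1$ directly.
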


Glynn discovered two families of o-polynomials \cite{Glynn83}. The first is described as follows.

\begin{theorem}
Let $m$ be odd. Then $\Glynnone(x)=x^{3\times 2^{(m+1)/2}+4}$ is an o-polynomial.
\end{theorem}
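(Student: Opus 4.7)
The plan is to invoke the characterisation of o-monomials just stated, reducing the problem to three verifications for $k = 3 \cdot 2^{(m+1)/2} + 4$. Set $h = (m+1)/2$ (an integer, since $m$ is odd) and $s = 2^h$, so $k = 3s + 4$. Two simple observations drive everything: $\gcd(h,m) = 1$ (any common divisor divides $2h - m = 1$), so $\sigma\colon x \mapsto x^s$ generates $\operatorname{Gal}(\gf(2^m)/\gf(2))$; and $s^2 = 2^{m+1} \equiv 2 \pmod{2^m - 1}$, which encodes the identity $\sigma^2 = (\text{squaring})$.

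The arithmetic conditions (a) and (b) succumb to a uniform trick. Suppose a prime $p$ divides $2^m - 1$ together with either $k$ or $k - 1 = 3(s+1)$; first rule out $p = 3$ using $2^m \equiv 2 \pmod 3$ for odd $m$, then solve for $s \bmod p$ and square, contradicting $s^2 \equiv 2 \pmod p$. For (a), $s \equiv -4/3 \pmod p$ yields $16 \equiv 18 \pmod p$, forcing $p \mid 2$, impossible since $2^m - 1$ is odd. For (b), $s \equiv -1 \pmod p$ yields $1 \equiv 2 \pmod p$, also impossible; the $2$-adic contribution to (b) is immediate because $k - 1 = 3(2^h + 1)$ is plainly odd.

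The main obstacle is condition (c): that $g(x) = ((x+1)^k + 1)\, x^{2^m - 2}$ permutes $\gf(2^m)$. The critical simplification is the Frobenius identity $(x+1)^{3s} = (x^s + 1)^3$, which rewrites $g$ as a polynomial expression in $x$, $x^s$, and $x^4$ only. My preferred route is to recast (c) via the equivalent ``no three points collinear'' formulation of the hyperoval (\ref{eqn-hyperoval}): for pairwise distinct $a, b, c \in \gf(2^m)$, verify that
$$ a(b^k + c^k) + b(a^k + c^k) + c(a^k + b^k) \neq 0. $$
Substituting $x^k = \sigma(x)^3\, x^4$ factors the left-hand side through the Galois action. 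After extracting the factor $(a+b)(b+c)(c+a)$, which is nonzero by distinctness, the residual symmetric factor can be analysed using $\sigma^2 = (\text{squaring})$: its vanishing would force the $\sigma$-images of $a, b, c$ to satisfy a relation that, upon applying $\sigma$ once more, collapses back to a coincidence among $a, b, c$. Executing this factorisation cleanly and pinning down the residual nonvanishing is the hard step, and essentially recovers Glynn's original 1983 argument.
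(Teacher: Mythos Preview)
The paper does not prove this theorem at all: it simply states the result and cites Glynn's original 1983 paper. So there is no ``paper's own proof'' to compare against; any self-contained argument you give goes beyond what the paper offers.

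That said, your proposal is not a complete proof. Your treatment of conditions (a) and (b) of the o-monomial characterisation is correct and clean --- the arithmetic with $s^2\equiv 2\pmod{2^m-1}$ works exactly as you describe. (Incidentally, you are right to check $\gcd(k-1,2^m-1)$ rather than the paper's stated $\gcd(k-1,2^m-2)$; the latter is a typo, since $3\mid 2^{m-1}-1$ and $3\mid k-1$ whenever $m$ is odd.)

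The genuine gap is condition (c). You correctly set up the collinearity determinant, correctly observe that $(a+b)(b+c)(c+a)$ divides it, and correctly identify the residual symmetric factor as the crux --- but you do not actually compute that residual factor or establish its nonvanishing. Your final sentence openly concedes this: ``Executing this factorisation cleanly and pinning down the residual nonvanishing is the hard step, and essentially recovers Glynn's original 1983 argument.'' That hard step \emph{is} the content of the theorem; everything before it is preparatory. As written, your proposal is a roadmap to Glynn's proof rather than a proof, and a reader could not fill in the missing computation without consulting the original reference --- which is exactly what the paper does by citation.
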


An extension of the second family of o-polynomials discovered by Glynn is documented in the following theorem.

\begin{theorem}
Let $m$ be odd. Then
\begin{eqnarray*}
\Glynntwo_a(x)= \left\{
\begin{array}{ll}
x^{2^{(m+1)/2}+2^{(3m+1)/4}} + a x^{2^{(m+1)/2}} + (ax)^{2^{(3m+1)/4}}   & \mbox{ if } m \equiv 1 \pmod{4}, \\
x^{2^{(m+1)/2}+2^{(m+1)/4}} + a x^{2^{(m+1)/2}} + (ax)^{2^{(m+1)/4}}           & \mbox{ if } m \equiv 3 \pmod{4}.
\end{array}
\right.
\end{eqnarray*}
 is an o-polynomial for all $a \in \gf(q)$.
\end{theorem}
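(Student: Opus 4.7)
The plan is to mimic the proof of the Segre extension by exhibiting $\Glynntwo_a(x)$ as a translate of the classical Glynn~II o-monomial $x^{\sigma+\gamma}$. Put $\sigma = 2^{(m+1)/2}$ and
\begin{equation*}
\gamma = \begin{cases} 2^{(3m+1)/4}, & m \equiv 1 \pmod 4, \\ 2^{(m+1)/4}, & m \equiv 3 \pmod 4, \end{cases}
\end{equation*}
so that in both cases $\Glynntwo_a(x) = x^{\sigma+\gamma} + a\,x^\sigma + a^\gamma x^\gamma$. Since $\gamma$ is a power of $2$ and $2^m-1$ is odd, $\gcd(\gamma,2^m-1)=1$, so the $\gamma$-th power map is a bijection on $\gf(2^m)$.

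First I would record the key exponent congruence $\sigma \equiv \gamma^2 \pmod{2^m-1}$. For $m \equiv 3 \pmod 4$ this holds on the nose, since $\gamma^2 = 2^{(m+1)/2} = \sigma$ as integers; for $m \equiv 1 \pmod 4$ one has $\gamma^2 = 2^{(3m+1)/2} = 2^m \cdot 2^{(m+1)/2} \equiv \sigma \pmod{2^m-1}$. Letting $c \in \gf(2^m)$ be the unique $\gamma$-th root of $a$ (with $c=0$ when $a=0$), the congruence immediately yields $c^\gamma = a$ and $c^\sigma = (c^\gamma)^\gamma = a^\gamma$.

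Next I would exploit the freshman dream in characteristic $2$, namely $(x+c)^{2^i} = x^{2^i} + c^{2^i}$, to expand
\begin{equation*}
(x+c)^{\sigma+\gamma} = (x^\sigma + c^\sigma)(x^\gamma + c^\gamma) = x^{\sigma+\gamma} + c^\sigma x^\gamma + c^\gamma x^\sigma + c^{\sigma+\gamma}.
\end{equation*}
Substituting $c^\gamma = a$ and $c^\sigma = a^\gamma$ gives the compact identity
\begin{equation*}
\Glynntwo_a(x) = (x+c)^{\sigma+\gamma} + c^{\sigma+\gamma},
\end{equation*}
which displays $\Glynntwo_a$ as a translate-plus-constant of $x^{\sigma+\gamma}$.

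To finish, I would invoke a mild extension of Theorem~\ref{thm-basicproperty}(d): for any $c \in \gf(2^m)$, if $g$ is an o-polynomial then so is $g(x+c)+g(c)$, as follows from a one-line reindexing $u = s+c$ in the defining PP condition $(g(x+s)+g(s))x^{q-2}$. Applying this shift lemma with $g(x) = x^{\sigma+\gamma}$, which is Glynn's original second o-monomial (the $a=0$ specialisation of the theorem), closes the argument. The only real obstacle is the exponent bookkeeping when $m \equiv 1 \pmod 4$, where $\sigma/\gamma$ is not a non-negative integer and must be interpreted modulo $2^m-1$; once the congruence $\sigma \equiv \gamma^2$ is in hand, the rest of the proof is automatic.
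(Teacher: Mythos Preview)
Your proposal is correct and follows essentially the same route as the paper: both exhibit $\Glynntwo_a(x)$ as $(x+c)^{\sigma+\gamma}+c^{\sigma+\gamma}$ for the appropriate $\gamma$-th root $c$ of $a$, and then appeal to the translate-of-an-o-polynomial principle together with the fact that $x^{\sigma+\gamma}$ is Glynn's original o-monomial. Your write-up is in fact more careful than the paper's sketch, which records $c$ as $a^{(m-1)/4}$ (the intended exponent is $2^{(m-1)/4}$) and mislabels the remaining case; your verification of the congruence $\sigma\equiv\gamma^2\pmod{2^m-1}$ and the explicit invocation of the generalised shift lemma fill in exactly what the paper leaves implicit.
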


\begin{proof}
Let $m \equiv 1 \pmod{4}$. Then
$$
\Glynntwo_a(x)=(x+a^{(m-1)/4})^{2^{(m+1)/2}+2^{(3m+1)/4}} + a^{2^{(m+1)/2}+2^{(3m+1)/4}}.
$$
The desired conclusion for the case $m \equiv 1 \pmod{4}$ can be similarly proved.
\end{proof}

Note that $\Glynntwo_0(x)$ is the original Glynn o-polynomial. So this is an extended family. For some applications,
the extended family may be useful.

\subsubsection{The Cherowitzo o-polynomials}

The following describes another class of o-polynomials.

\begin{theorem}
Let $m$ be odd and $e=(m+1)/2$. Then
$$
\Cherowitzo_a(x)=x^{2^e}+ax^{2^e+2}+a^{2^e +2}x^{3 \times 2^e +4}
$$
is an o-polynomial on $\gf(2^m)$ for every $a \in \gf(2^m)$.
\end{theorem}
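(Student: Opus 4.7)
My plan follows the pattern used for the Segre and Glynn extensions: reduce the generalized Cherowitzo family to the case $a=1$ via an algebraic identity and then invoke a closure property of o-polynomials. The case $a=0$ is immediate, since $\Cherowitzo_0(x)=x^{2^e}=\Trans(x)$, and because $m$ is odd $\gcd(m,m+1)=1$ forces $\gcd((m+1)/2,m)=1$, so the translation o-polynomial theorem applies.

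For $a\neq 0$, let $\sqrt{a}=a^{2^{m-1}}\in\gf(2^m)^*$ denote the unique square root of $a$ in characteristic two. The central identity I would verify is
\[
\Cherowitzo_a(x) \;=\; a^{-2^{e-1}}\,\Cherowitzo_1\bigl(\sqrt{a}\,x\bigr).
\]
This amounts to direct expansion: the three monomials of $\Cherowitzo_1(\sqrt{a}\,x)$ are $a^{2^{e-1}}x^{2^e}$, $a^{2^{e-1}+1}x^{2^e+2}$, and $a^{3\cdot 2^{e-1}+2}x^{3\cdot 2^e+4}$; multiplying by $a^{-2^{e-1}}$ rescales the coefficients to $1$, $a$, and $a^{2^e+2}$, matching the definition of $\Cherowitzo_a$ exactly.

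Next I would establish a small closure property complementing Theorem~\ref{thm-basicproperty}: if $f$ is an o-polynomial over $\gf(2^m)$ then so is $g(x)=\alpha f(\beta x)$ for any $\alpha,\beta\in\gf(2^m)^*$. This is a short check: clearly $g(0)=0$ and $g$ is a permutation, and for each $s\in\gf(2^m)$ one computes $g_s(x)=\alpha\beta\,f_{\beta s}(\beta x)$ whenever $x\neq 0$, which is a permutation of $\gf(2^m)$ because $f_{\beta s}$ is. Applying this closure with $f=\Cherowitzo_1$, which is an o-polynomial by Cherowitzo's original theorem, completes the argument.

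The main obstacle is identifying the correct substitution. A Segre-style shift of the form $\Cherowitzo_1(x+c)+\Cherowitzo_1(c)$ does not work here, because $3\cdot 2^e+4=2^2+2^e+2^{e+1}$ is a sum of three distinct powers of two, so $(x+c)^{3\cdot 2^e+4}$ already produces eight monomials in $x$, far too many to collapse into a single term of $\Cherowitzo_a$. Once one recognises that a pure scaling substitution does the job, the rest reduces to elementary arithmetic on exponents modulo $2^m-1$.
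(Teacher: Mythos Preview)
Your proof is correct and follows exactly the approach the paper sketches in its remarks after the theorem: the key identity $\Cherowitzo_a(x)=a^{-2^{e-1}}\Cherowitzo_1(a^{1/2}x)$, combined with Cherowitzo's original result for $a=1$. You have simply filled in the details the paper omits---the separate treatment of $a=0$ via the translation theorem, and the explicit verification that $\alpha f(\beta x)$ is again an o-polynomial---so there is nothing to correct.
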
 

We have the following remarks on this family.
\begin{enumerate}[1)]
\item $\Cherowitzo_1(x)$ is the original Cherowitzo o-polynomial \cite{Ch88,Ch96}. So this is an extended family.
\item A proof of the o-polynomial property of this extended family goes as follows. It can be easily verified that 
$\Cherowitzo_a(x)=a^{-2^{e-1}} \Cherowitzo_1(a^{1/2}x)$. The desired conclusion then follows.  
\item $\overline{\Cherowitzo}(x)=x^{2^m-2^e}+ax^{2^m-2^e-2}+a^{2^e +2}x^{2^m-3 \times 2^e -4}$.
\item It is known that $\Cherowitzo_1^{-1}(x)=x(x^{2^e+1}+x^3+x)^{2^{e-1}-1}$.
\end{enumerate}

The proofs of the following two theorems are straightforward and left to the reader.

\begin{theorem}
$$
\Cherowitzo_a^{-1}(x)=x(ax^{2^e+1}+a^{2^e}x^3+x)^{2^{e-1}-1}.
$$
\end{theorem}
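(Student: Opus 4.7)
The plan is to reduce the formula for $\Cherowitzo_a^{-1}$ to the already stated (in remark 4 above) formula for $\Cherowitzo_1^{-1}$ by exploiting the scaling identity $\Cherowitzo_a(x)=a^{-2^{e-1}}\Cherowitzo_1(a^{1/2}x)$ pointed out in remark 2. Writing $y=\Cherowitzo_a(x)$, this identity gives $a^{2^{e-1}}y=\Cherowitzo_1(a^{1/2}x)$, and inverting yields
$$\Cherowitzo_a^{-1}(y)=a^{-1/2}\,\Cherowitzo_1^{-1}\!\left(a^{2^{e-1}}y\right).$$
So the whole task reduces to substituting the known expression $\Cherowitzo_1^{-1}(u)=u(u^{2^e+1}+u^3+u)^{2^{e-1}-1}$ at $u=a^{2^{e-1}}y$ and simplifying the $a$-exponents to match the target expression.

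Next I would carry out that substitution explicitly. Set
$$T=(a^{2^{e-1}}y)^{2^e+1}+(a^{2^{e-1}}y)^3+a^{2^{e-1}}y=a^{2^{2e-1}+2^{e-1}}y^{2^e+1}+a^{3\cdot 2^{e-1}}y^3+a^{2^{e-1}}y.$$
Because $a\in\gf(2^m)$ with $m=2e-1$, we have $a^{2^{2e-1}}=a^{2^m}=a$, so the first term becomes $a^{1+2^{e-1}}y^{2^e+1}$. Factoring $a^{2^{e-1}}$ out of all three summands then gives the clean identity
$$T=a^{2^{e-1}}\!\left(ay^{2^e+1}+a^{2^e}y^3+y\right),$$
which already exhibits the desired inner polynomial from the statement.

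Finally I would raise this to the $(2^{e-1}-1)$ power and collect the $a$-exponents. The $a$-factor becomes $a^{2^{e-1}(2^{e-1}-1)}=a^{2^{2e-2}-2^{e-1}}$, and multiplying by the leading $a^{2^{e-1}}y$ from $\Cherowitzo_1^{-1}$ leaves $a^{2^{2e-2}}y$. Since $a^{2^{2e-1}}=a$, we have $a^{2^{2e-2}}=a^{1/2}$, which exactly cancels the $a^{-1/2}$ in front. The result collapses to
$$\Cherowitzo_a^{-1}(y)=y\!\left(ay^{2^e+1}+a^{2^e}y^3+y\right)^{2^{e-1}-1},$$
as claimed.

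The only mildly delicate step is the exponent bookkeeping modulo $2^m-1$ in the simplification of $T$, in particular recognizing that the three terms share the common factor $a^{2^{e-1}}$ (which is what aligns with $a^{-1/2}\cdot a^{2^{2e-2}}=1$ at the end). Everything else is formal substitution into the known inverse of $\Cherowitzo_1$ and a single use of $a^{2^m}=a$.
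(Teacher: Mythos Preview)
Your argument is correct and is exactly the approach the paper has set up: the paper leaves the proof to the reader immediately after recording the scaling identity $\Cherowitzo_a(x)=a^{-2^{e-1}}\Cherowitzo_1(a^{1/2}x)$ and the formula $\Cherowitzo_1^{-1}(x)=x(x^{2^e+1}+x^3+x)^{2^{e-1}-1}$, so reducing to these two facts is the intended route. The only small omission is that your derivation implicitly assumes $a\neq 0$ (the scaling identity is vacuous otherwise); the case $a=0$ is trivial since then $\Cherowitzo_0(x)=x^{2^e}$ and the claimed inverse reduces to $x\cdot x^{2^{e-1}-1}=x^{2^{e-1}}=x^{2^{m-e}}$.
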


\begin{theorem}
$$
\overline{\Cherowitzo}=(ax^{2^m-2^e-2}+a^{2^e}x^{2^m-4}+x^{2^m-2})^{2^{e-1}-1}.
$$
\end{theorem}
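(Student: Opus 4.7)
The plan is to verify the identity by computing both sides independently on $\gf(q)^*$ (with $q = 2^m$) and matching them monomial by monomial, observing that both sides vanish at $x=0$.

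First, I would expand the left-hand side directly from the definition $\overline{f}(x) = xf(x^{q-2})$ applied to $\Cherowitzo_a(x) = x^{2^e} + ax^{2^e+2} + a^{2^e+2}x^{3\cdot 2^e+4}$. Using $(x^{q-2})^k \equiv x^{-k}$ on $\gf(q)^*$ (with exponents reduced modulo $q-1$), this produces the three-term expression
$$\overline{\Cherowitzo}_a(x) = x^{2^m-2^e} + a\,x^{2^m-2^e-2} + a^{2^e+2}x^{2^m-3\cdot 2^e-4},$$
which coincides with the form recorded in Remark~3 above. This step is essentially a rewriting.

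Second, for the right-hand side, I would exploit the characteristic-$2$ structure. Writing $2^{e-1}-1 = 1 + 2 + \cdots + 2^{e-2}$ and applying the Frobenius identity $(A+B+C)^{2^i} = A^{2^i} + B^{2^i} + C^{2^i}$, the $(2^{e-1}-1)$-th power factors as a product of $e-1$ trinomials
$$\prod_{i=0}^{e-2}\bigl(a^{2^i}x^{2^i(2^m-2^e-2)} + a^{2^{e+i}}x^{2^i(2^m-4)} + x^{2^i(2^m-2)}\bigr).$$
A useful preprocessing step is to factor $x^{2^m-2}$ out of each summand of the original trinomial on $\gf(q)^*$, rewriting the inner expression as $1 + a^{2^e}x^{-2} + ax^{-2^e}$ and pulling the monomial factor $x^{(2^m-2)(2^{e-1}-1)}$ outside the power; this strips away one monomial degree of freedom and simplifies the subsequent exponent bookkeeping.

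Third, the remaining task is to expand this product, reduce all exponents of $x$ and of $a$ modulo $q-1$, and verify that after collection in characteristic $2$ exactly three monomials survive and match the three terms of the LHS. The main obstacle is this combinatorial reduction: the expansion nominally has $3^{e-1}$ terms, and one must confirm that all but three cancel in pairs under the cyclic exponent reduction coming from $x^{q-1}=1$, $a^{q-1}=1$. To sanity-check the identity before committing to the general induction on $i$, I would first verify the case $m=3$, $e=2$ (where $2^{e-1}-1 = 1$ and the claim collapses to a direct equality of trinomials), and then $m=5$, $e=3$ (where the exponent is $3$), to ensure the proposed cancellation pattern is compatible with explicit calculation before attempting a closed-form match term by term in general.
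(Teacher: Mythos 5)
The proposal rests on a misidentification of the left-hand side, and the identity you actually set out to prove is false. You take $\overline{\Cherowitzo}$ to mean $\overline{\Cherowitzo_a}(x)=x\,\Cherowitzo_a(x^{2^m-2})$, i.e.\ the trinomial $x^{2^m-2^e}+ax^{2^m-2^e-2}+a^{2^e+2}x^{2^m-3\cdot2^e-4}$ already recorded in the remarks preceding the theorem, and then aim to show that the $(2^{e-1}-1)$-th power on the right collapses to that trinomial. It does not: in your own first test case $m=3$, $e=2$ (where $2^{e-1}-1=1$) the two sides are $x^{4}+ax^{2}+a^{6}x^{6}$ and $ax^{2}+a^{4}x^{4}+x^{6}$, which differ for every $a\neq 1$; and for $m=5$, $e=3$ the right-hand side is a nine-term reduced polynomial even when $a=1$, so it cannot equal any trinomial. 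No combinatorial cancellation among the $3^{e-1}$ terms of your expansion can repair this, because the two functions are genuinely different. Had you actually carried out the sanity checks you propose, they would have flagged the problem immediately; as written, the plan asserts a false target and then defers the only substantive step.

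The theorem is in fact a statement about $\overline{\Cherowitzo_a^{-1}}$ --- the bar of the compositional inverse given in the immediately preceding theorem; the superscript ${}^{-1}$ and the subscript $a$ have been dropped in the display. (Reading it as $\overline{\Cherowitzo_a}$ would also make the theorem a duplicate of the earlier remark.) With the correct reading the proof is a one-line substitution and none of the product expansion is needed: for $x\in\gf(2^m)^{*}$ one has $x^{2^m-2}=x^{-1}$, hence
$$
\overline{\Cherowitzo_a^{-1}}(x)=x\cdot x^{-1}\bigl(ax^{-(2^e+1)}+a^{2^e}x^{-3}+x^{-1}\bigr)^{2^{e-1}-1}
=\bigl(ax^{2^m-2^e-2}+a^{2^e}x^{2^m-4}+x^{2^m-2}\bigr)^{2^{e-1}-1},
$$
where each exponent $-k$ is rewritten as $2^m-1-k$, and both sides vanish at $x=0$ because $2^{e-1}-1\geq 1$. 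This substitution is the ``straightforward'' argument the paper leaves to the reader.
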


\subsubsection{The Payne o-polynomials}

The following documents an extended family of o-trinomials.

\begin{theorem}\label{thm-Payne11}
Let $m$ be odd. Then
$\Payne_a(x)=x^{\frac{5}{6}}+ax^{\frac{3}{6}}+a^2x^{\frac{1}{6}}$ is an o-polynomial on $\gf(2^m)$ for every $a \in \gf(2^m)$, where $\frac{1}{6}$ denotes the multiplicative inverse of $6$ modulo $2^m-1$. 
\end{theorem}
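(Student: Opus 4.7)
The plan is to reduce the general case to the classical Payne o-polynomial $\Payne_1(x) = x^{5/6} + x^{3/6} + x^{1/6}$, mirroring the strategy used in this section for the Segre and (especially) the Cherowitzo families. Throughout, the hypothesis that $m$ is odd is used twice: first, $\gcd(6, 2^m-1)=1$ (since $2^m \equiv 2 \pmod{3}$), so $1/6$ is a well-defined residue modulo $2^m-1$; second, $\gcd(2, 2^m-1)=1$, so each nonzero $a \in \gf(2^m)$ has a unique square root $a^{1/2} := a^{2^{m-1}}$, making $a^{5/2}$ and $a^{-3}$ meaningful.

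The case $a=0$ is handled first and separately: $\Payne_0(x) = x^{5/6}$. The Segre monomial $x^6 = \Segre_0(x)$ is an o-polynomial on $\gf(2^m)$ for odd $m$, and Theorem~\ref{thm-basicproperty2} with $k=6$ lists $x^{(k-1)/k} = x^{5/6}$ among the o-monomials derived from $x^6$. So $\Payne_0$ is an o-polynomial.

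For $a \neq 0$, the key step is to verify, by direct substitution, the identity
\[
\Payne_a(x) \;=\; a^{5/2}\, \Payne_1\bigl(a^{-3} x\bigr),
\]
analogous to the Cherowitzo identity $\Cherowitzo_a(x) = a^{-2^{e-1}} \Cherowitzo_1(a^{1/2}x)$ stated earlier. Expanding the right-hand side yields $x^{5/6} + a x^{3/6} + a^2 x^{1/6}$ after collecting powers of $a$, which matches $\Payne_a(x)$. It then suffices to know that the o-polynomial property is preserved under the transformation $f(x) \mapsto c\, f(\alpha x)$ for any $c,\alpha \in \gf(2^m)^*$. This is a short check from the definition: writing $g(x) = c f(\alpha x)$ and $t = \alpha s$, a direct computation gives $g_s(x) = c\,\alpha^{2-q}\, f_t(\alpha x)$, so $g_s$ is a permutation of $\gf(2^m)$ exactly when $f_t$ is.

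Combining these ingredients with the classical theorem of Payne (1985) that $\Payne_1$ is an o-polynomial on $\gf(2^m)$ for odd $m$ completes the argument. There is no real obstacle here; the only mild subtlety is bookkeeping with the fractional exponents modulo $2^m-1$, and this is exactly what the parity of $m$ is designed to handle.
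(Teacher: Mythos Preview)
Your proposal is correct and follows the same approach as the paper, which simply records the identity $\Payne_a(x)=a^{5/2}\Payne_1(a^{-3}x)$ and appeals to Payne's original result for $\Payne_1$. You add welcome detail---treating the $a=0$ case separately via Theorem~\ref{thm-basicproperty2} and verifying explicitly that $f(x)\mapsto c\,f(\alpha x)$ preserves the o-polynomial property---but the core reduction is identical.
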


We have the following remarks on this family.
\begin{enumerate}[a)]
\item $\Payne_1(x)$ is the original Payne o-polynomial \cite{Pay85}. So this is an extended family. 
\item It can be verified that $\Payne_a(x)=a^{5/2}\Payne_1(a^{-3}x)$. The desired conclusion of 
of Theorem \ref{thm-Payne11} follows.  
\item $\Payne_a(x)=xD_5(x^{\frac{1}{6}}, a)$.
\item $\overline{\Payne}_a(x)=a^{2^m-3}\Payne_{a^{2^m-2}}(x)$.
\item Note that
$$
\frac{1}{6}=\frac{5 \times 2^{m-1}-2}{3}.
$$
We have then
$$
\Payne_a(x)=x^{\frac{2^{m-1}+2}{3}} + ax^{2^{m-1}} + a^2x^{\frac{5 \times 2^{m-1}-2}{3}}.
$$
\end{enumerate}

\begin{theorem}
Let $m$ be odd. Then
\begin{eqnarray}\label{eqn-PayneInverse}
\Payne_1^{-1}(x)=\left( D_{\frac{3 \times 2^{2m}-2}{5}}(x, 1)\right)^6
\end{eqnarray}
and $\overline{\Payne}_1^{-1}(x)$ are an o-polynomial.
\end{theorem}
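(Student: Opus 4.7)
The plan is to rewrite $\Payne_1$ as a composition of a power map and a Dickson polynomial, invert each factor separately, and then read off the o-polynomial conclusion from Theorem~\ref{thm-basicproperty}.

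First I would verify the algebraic identity $\Payne_1(x)=D_5(x^{1/6},1)$. Since $m$ is odd, $2^m-1$ is odd and $2^m\equiv 2\pmod 3$, so $\gcd(6,2^m-1)=1$ and the exponent $1/6$ makes sense as a permutation of $\gf(2^m)$. Setting $u=x^{1/6}$ and using that in characteristic $2$ the Dickson polynomial is $D_5(x,1)=x^5+x^3+x$, we get
$$\Payne_1(x)=x^{5/6}+x^{3/6}+x^{1/6}=u^5+u^3+u=D_5(u,1),$$
which matches the remark $\Payne_a(x)=xD_5(x^{1/6},a)$ after tracking exponents carefully. Hence $\Payne_1^{-1}(x)=\bigl(D_5(\cdot,1)^{-1}(x)\bigr)^6$, reducing the problem to inverting $D_5(\cdot,1)$ on $\gf(2^m)$.

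Second I would invoke the standard Nöbauer theory of Dickson permutations: $D_n(x,1)$ permutes $\gf(q)$ iff $\gcd(n,q^2-1)=1$, and in that case its compositional inverse is $D_N(x,1)$ where $N\equiv n^{-1}\pmod{q^2-1}$. This is seen from the substitution $x=y+y^{-1}$, $y\in\gf(q^2)^*$, giving $D_n(x,1)=y^n+y^{-n}$. With $q=2^m$ and $m$ odd, $2^{2m}\equiv 4\pmod 5$, so $\gcd(5,2^{2m}-1)=1$ and the inverse exists. A one-line check shows
$$5\cdot\frac{3\cdot 2^{2m}-2}{5}=3\cdot 2^{2m}-2=3(2^{2m}-1)+1\equiv 1\pmod{2^{2m}-1},$$
so $N=(3\cdot 2^{2m}-2)/5$ is precisely the required modular inverse. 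Combining with the first step yields formula~(\ref{eqn-PayneInverse}).

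Finally, for the o-polynomial claims: Theorem~\ref{thm-Payne11} gives that $\Payne_1$ is an o-polynomial, so part~(a) of Theorem~\ref{thm-basicproperty} immediately yields that $\Payne_1^{-1}$ is one as well. Analogously, part~(c) of that theorem shows $\overline{\Payne}_1$ is an o-polynomial, and another application of part~(a) shows $\overline{\Payne}_1^{-1}$ is an o-polynomial; no additional work is required for this half of the statement.

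The only genuinely delicate point is the second step, namely making sure the modular inversion of the Dickson index is carried out modulo $2^{2m}-1$ rather than $2^m-1$. Once this is pinned down via the $y+y^{-1}$ substitution in $\gf(2^{2m})$, everything else is bookkeeping with exponents.
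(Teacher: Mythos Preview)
Your proposal is correct and follows the same route as the paper's proof: write $\Payne_1(x)=D_5(x^{1/6},1)$, invert the Dickson factor via $D_5^{-1}=D_{(3\cdot 2^{2m}-2)/5}$, and then raise to the sixth power. You are in fact more careful than the paper on one point: the paper states the relevant congruence as ``the multiplicative inverse of $5$ modulo $2^m-1$,'' whereas the Dickson inversion genuinely requires the modulus $2^{2m}-1$, exactly as you emphasize (the stated number happens to be an inverse modulo both, so the formula survives). Your explicit appeal to Theorem~\ref{thm-basicproperty} for the o-polynomial conclusions is also something the paper leaves implicit.
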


\begin{proof}
Note that the multiplicative inverse of $5$ modulo $2^m-1$ is $\frac{3 \times 2^{2m}-2}{5}$. The conclusion then
follows from the definition of the Payne polynomial and the fact that
$$
D_5(x, 1)^{-1}=D_{\frac{3 \times 2^{2m}-2}{5}}(x, 1).
$$
\end{proof}

\subsubsection{The Subiaco o-polynomials}

The Subiaco o-polynomials are given in the following theorem \cite{Subiaco}.

\begin{theorem}\label{thm-Subiaco}
Define
$$
\Subiaco_a(x)=((a^2(x^4+x)+a^2(1+a+a^2)(x^3+x^2)) (x^4 + a^2 x^2+1)^{2^m-2}+x^{2^{m-1}},
$$
where $\Tr(1/a)=1$ and $d \not\in \gf(4)$ if $m \equiv 2 \bmod{4}$. Then $\Subiaco_a(x)$ is an
o-polynomial on $\gf(2^m)$.
\end{theorem}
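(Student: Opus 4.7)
The plan is to establish the o-polynomial property of $\Subiaco_a$ by verifying its two defining conditions: that $\Subiaco_a$ permutes $\gf(2^m)$ with $\Subiaco_a(0)=0$, and that for every $s\in \gf(2^m)$ the polynomial $f_s(x)=(\Subiaco_a(x+s)+\Subiaco_a(s))x^{2^m-2}$ is again a permutation. A preliminary task is to confirm that the denominator $g(x)=x^4+a^2x^2+1$ has no zeros in $\gf(2^m)$, so that the rational part $(a^2(x^4+x)+a^2(1+a+a^2)(x^3+x^2))(x^4+a^2x^2+1)^{2^m-2}$ represents an honest rational function on all of $\gf(2^m)$. This amounts to showing that neither $g$ nor its two quadratic factors splits over $\gf(2^m)$, which is precisely what the hypothesis $\Tr(1/a)=1$ --- together with the extra exclusion $a\notin\gf(4)$ when $m\equiv 2\pmod 4$ --- is designed to guarantee.

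For the o-polynomial property itself, I would adopt the three-point collinearity reformulation implicit in (\ref{eqn-hyperoval}): $\Subiaco_a$ is an o-polynomial on $\gf(2^m)$ if and only if for every $s\in\gf(2^m)$ the slope map
$$t\mapsto \frac{\Subiaco_a(t)+\Subiaco_a(s)}{t+s}, \qquad t\neq s,$$
extended by $0$ at $t=s$, is a permutation of $\gf(2^m)$. Writing $\Subiaco_a(x)=R(x)/g(x)+x^{2^{m-1}}$ and clearing denominators, this collinearity condition translates into a polynomial identity in $t$ with coefficients depending on $s$ and $a$, and the substitution $u=t^{1/2}$ is then used to absorb the square-root contribution coming from $x^{2^{m-1}}$.

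The main obstacle is the interaction between the square-root term $x^{2^{m-1}}$ and the rational part. A direct expansion of $f_s$ yields a polynomial of high degree in $u=t^{1/2}$ whose reduction modulo $g$ relies on delicate cancellations tied to the specific numerator coefficient $1+a+a^2$ of $\Subiaco_a$. Rather than grind through these identities by hand, the cleanest route --- and the one taken in the original construction of \cite{Subiaco} --- is to identify $\Subiaco_a$ as the o-polynomial extracted from the Subiaco family of $q$-clans, and then to invoke the general correspondence between $q$-clans and herds of ovals, from which the o-polynomial property follows automatically. The conditions $\Tr(1/a)=1$ and $a\notin\gf(4)$ (when $m\equiv 2\pmod 4$) then arise naturally as the precise constraints ensuring that the corresponding $q$-clan is well-defined and yields a genuine hyperoval rather than a degenerate or reducible object.
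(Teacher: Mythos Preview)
The paper does not supply its own proof of this theorem: it is stated as a known result with a citation to \cite{Subiaco}, and nothing further. Your proposal, after outlining a direct-verification strategy, ultimately lands in the same place --- you explicitly defer the core argument to the $q$-clan/herd-of-ovals correspondence of \cite{Subiaco}. In that sense your approach and the paper's coincide: both treat the result as imported from the literature rather than reproved here.

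Where you go beyond the paper is in the surrounding discussion: you sketch why the trace condition $\Tr(1/a)=1$ forces $x^4+a^2x^2+1$ to be nonvanishing on $\gf(2^m)$, and you indicate the collinearity reformulation one would use for a direct attack. That context is informative but is not a self-contained proof --- you yourself acknowledge that the ``delicate cancellations'' are not carried out, and you fall back on the $q$-clan machinery. So your proposal is best read not as an independent proof but as an annotated citation, which is exactly the status the paper gives this theorem.
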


As a corollary of  Theorem \ref{thm-Subiaco}, we have the following.

\begin{corollary}
Let $m$ be odd. Then
\begin{eqnarray}\label{cor-Subiaco}
\Subiaco_1(x)=(x+x^2+x^3+x^4) (x^4 + x^2+1)^{2^m-2}+x^{2^{m-1}}
\end{eqnarray}
is an o-polynomial over $\gf(2^m)$.
\end{corollary}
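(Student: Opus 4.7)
The plan is to deduce the corollary directly from Theorem~\ref{thm-Subiaco} by verifying that $a=1$ satisfies the required hypotheses when $m$ is odd, and then substituting $a=1$ into the general Subiaco formula and simplifying in characteristic $2$.

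First I would check the two conditions of Theorem~\ref{thm-Subiaco}. For $a=1$ we have $1/a = 1$, so $\Tr(1/a) = \Tr(1)$, which equals $m \bmod 2$. Since $m$ is assumed odd, this gives $\Tr(1/a) = 1$, as required. The second hypothesis (on $\gf(4)$) applies only when $m \equiv 2 \pmod 4$, and this cannot occur for odd $m$, so that condition is vacuously satisfied. Hence $a=1$ is an admissible parameter in Theorem~\ref{thm-Subiaco}.

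Next I would substitute $a=1$ into
$$\Subiaco_a(x)=\bigl(a^2(x^4+x)+a^2(1+a+a^2)(x^3+x^2)\bigr)(x^4+a^2x^2+1)^{2^m-2}+x^{2^{m-1}}.$$
Using $a^2 = 1$ and $1+a+a^2 = 1+1+1 = 1$ in characteristic $2$, the bracketed factor collapses to $(x^4+x)+(x^3+x^2) = x+x^2+x^3+x^4$, while $x^4+a^2x^2+1$ becomes $x^4+x^2+1$. This yields exactly the expression in (\ref{cor-Subiaco}), and Theorem~\ref{thm-Subiaco} then asserts it is an o-polynomial over $\gf(2^m)$.

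There is essentially no obstacle: this is a pure specialization argument, and the only thing that could go wrong is miscounting the characteristic-$2$ arithmetic $1+1+1=1$, or overlooking the case distinction on $m \bmod 4$. Both are immediate once $m$ is restricted to be odd, so the corollary follows directly.
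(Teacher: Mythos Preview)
Your proposal is correct and is exactly the intended argument: the paper states this result as an immediate corollary of Theorem~\ref{thm-Subiaco} without giving a separate proof, and the only work is precisely the specialization $a=1$ together with the observation that $\Tr(1)=1$ for odd $m$ and that the $m\equiv 2\pmod 4$ restriction is vacuous. Your characteristic-$2$ simplifications are all accurate.
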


\subsubsection{The Adelaide o-polynomials}

The last known family of o-polynomials were discovered in \cite{Adeil} and described in the following theorem.

\begin{theorem}
Let $m$ be even and $r=q^2$. The trace function from $\gf(r)$ to $\gf(q)$ is
defined by $\Tr_{2}(x)=x+x^q$. Let $b \in \gf(r)$ such that $b^{q+1}=1$ and $b \neq 1$ and define
\begin{eqnarray*}
\lefteqn{\Adelaide_b(x) =} \\
&  \Tr_{2}(b)^{q-2}  \Tr_{2}(b^\ell) (x+1) +
                       \Tr_{2}(b)^{q-2}  \Tr_{2}((bx+b^q)^{r_1+\ell}) (x+\Tr_2(b)x^{2^{m-1}}+1)^{q-\ell} +x^{2^{m-1}},
\end{eqnarray*}
where $\ell = \pm (q-1)/3$. Then $f_b$ is an o-polynomial.
\end{theorem}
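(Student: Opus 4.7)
The plan is to verify that the point set $\hyperoval_{\Adelaide_b}$ of (\ref{eqn-hyperoval}) is a hyperoval of $\PG(2,q)$; by the equivalence recorded there (after normalising by $\Adelaide_b(1)^{-1}$ to recover $f(1)=1$), this is equivalent to $\Adelaide_b$ being an o-polynomial. Concretely, three conditions must hold: $\Adelaide_b$ is a permutation of $\gf(q)$, for each $s\in\gf(q)$ the polynomial $(\Adelaide_b(x+s)+\Adelaide_b(s))x^{q-2}$ permutes $\gf(q)$, and for any three distinct $s,t,u\in\gf(q)$,
\[
(s+t)\Adelaide_b(u)+(t+u)\Adelaide_b(s)+(u+s)\Adelaide_b(t)\ne 0.
\]

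The first preparatory step is algebraic simplification using $b^{q+1}=1$: this yields $b^q=b^{-1}$, $\Tr_{2}(b)=b+b^{-1}$, and $\Tr_{2}(b^\ell)=b^\ell+b^{-\ell}$, turning the middle term of $\Adelaide_b$ into a compact expression symmetric in $b$ and $b^{-1}$. Because $q=2^m$ with $m$ even, we have $q\equiv 1\pmod{3}$, so $(q-1)/3$ is an integer; this is what makes the choice $\ell=\pm(q-1)/3$ arithmetically meaningful. I would then parameterise the elements of $\gf(q)$ by the $(q+1)$-element unit circle $U=\{w\in\gf(q^2):w^{q+1}=1\}$ through a standard rational map $U\to\gf(q)$ (for example, $w\mapsto\Tr_{2}(bw)/\Tr_{2}(b)$, augmented appropriately so as to cover $\gf(q)$). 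Under this substitution each summand in $\Adelaide_b$ becomes a symmetric Laurent expression in $w$, since $\Tr_{2}(y)$ is invariant under $y\mapsto y^q$.

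The core of the argument is to substitute this parameterisation into the non-collinearity expression and factor the resulting sum. The exponent $\ell=\pm(q-1)/3$ has the effect of acting as a cubic character on the powers of $b$ that appear, so that the three terms line up in a way that permits factorisation into a product whose non-vanishing follows from $b\ne 1$ together with the pairwise distinctness of $s,t,u$. The main obstacle is precisely this factorisation step: it is a lengthy but essentially mechanical calculation whose correctness depends sensitively on the balance between the coefficient $\Tr_{2}(b)^{q-2}\Tr_{2}(b^\ell)$, the exponent $q-\ell$ in the second summand, and the additive correction $x^{2^{m-1}}$. For the full computation, and for the verification of the two subsidiary conditions corresponding to the infinite points $(0,1,0)$ and $(0,0,1)$, I would follow the argument of \cite{Adeil} rather than rederive each identity.
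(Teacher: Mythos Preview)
The paper does not prove this theorem at all; it simply records the statement and attributes it to \cite{Adeil}. Your proposal, after outlining a direct verification strategy, ultimately defers the substance of the argument to the same reference. In that narrow sense the two agree: neither supplies a self-contained proof.

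One remark on the sketch itself: the route you describe --- parameterise by the unit circle in $\gf(q^2)$, expand the three-point collinearity expression, and factor --- is \emph{not} the argument of \cite{Adeil}. That paper builds the Adelaide hyperovals through the unified $q$-clan/herd machinery (flocks of a quadratic cone and the associated herds of ovals), so the o-polynomial property emerges as a consequence of verifying the $q$-clan condition rather than from a direct factorisation of the non-collinearity sum. A bare-hands computation along the lines you propose is conceivable in principle, but it would be considerably more delicate than your sketch suggests, and in any case it would not amount to ``following the argument of \cite{Adeil}''. If you intend to cite rather than prove, the honest statement is simply that the result is due to \cite{Adeil}; if you intend to give an independent direct proof, the factorisation step needs to be carried out in full, and that is where the real difficulty lies.
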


We need to clarify the definition of this family of o-polynomials. First of all, $\Adelaide_b(x)$ is defined as a
polynomial over $\gf(q)$, while $b$ is an element from the extension field. Secondly, note that $\ell$ could be negative. The
definition above is a modified version of the original one in the literature.

\section{The first recursive construction}\label{sec-1stc}

We now present the first recursive construction of PPs over $\gf(q^2)$ from those over $\gf(q)$.
Let $f_1(x)$ and $f_2(x)$ be two polynomials over $\gf(q)$, and let $\beta \in \gf(q^2) \setminus \gf(q)$. 
Note that $\{1, \beta\}$ is a basis of
$\gf(q^2)$ over $\gf(q)$. Let $z=x+\beta y$, where  $x \in \gf(q)$, $y \in \gf(q)$
and $z$ are also related by the expression of (\ref{eqn-basisexpression}). It follows from Theorem
\ref{main-thm2.1} that
\begin{eqnarray}\label{eqn-bignewF11}
F_1(z) := f_1(x) + \beta f_2(y)
= f_1 \left( \frac{\beta^q z - \beta z^q}{\beta^q-\beta} \right) + \beta \  f_2 \left( \frac{z^q-z}{\beta^q-\beta} \right).
\end{eqnarray}
is a permutation polynomial over $\gf(r)$ if and only if both $f_1(x)$ and $f_2(x)$ are  permutation polynomials over $\gf(q)$. This proves the following theorem.

\begin{theorem}\label{thm-ourpp1}
Let $f_1(x)$ and $f_2(x)$ be  two polynomials over $\gf(q)$. Then the polynomial $F_1(z)$ of (\ref{eqn-bignewF11})
is a permutation polynomial over $\gf(q^2)$ if and only if both $f_1(x)$ and $f_2(x)$ are  permutation polynomials over
$\gf(q)$.
\end{theorem}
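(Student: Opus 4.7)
The plan is to derive Theorem \ref{thm-ourpp1} as an immediate specialization of the general result in Theorem \ref{main-thm2.1}. Since $\beta \in \gf(q^2) \setminus \gf(q)$, the pair $\{1, \beta\}$ is a basis of $\gf(q^2)$ over $\gf(q)$, so I would take $m = 2$, set both bases appearing in Theorem \ref{main-thm2.1} to be $\{\beta_1,\beta_2\} = \{\gamma_1,\gamma_2\} = \{1,\beta\}$, and choose $A$ to be the $2 \times 2$ identity matrix over $\gf(q)$, which is obviously nonsingular.

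Under these choices, the relation $(x_1,x_2) = (z_1,z_2)A$ collapses to $(x_1,x_2) = (z_1,z_2)$, and the decomposition $z = \beta_1 z_1 + \beta_2 z_2$ becomes $z = z_1 + \beta z_2$. Thus in the notation of Theorem \ref{thm-ourpp1} we have $x_1 = x$ and $x_2 = y$, where $x$ and $y$ are given explicitly by (\ref{eqn-basisexpression}). Substituting into the defining expression $F(z) = \gamma_1 f_1(x_1) + \gamma_2 f_2(x_2)$ of Theorem \ref{main-thm2.1}, I recover precisely $F_1(z) = f_1(x) + \beta f_2(y)$, which is the polynomial in (\ref{eqn-bignewF11}).

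Theorem \ref{main-thm2.1} then gives both directions of the equivalence at once: $F_1(z)$ is a PP over $\gf(q^2)$ if and only if both $f_1(x)$ and $f_2(x)$ are PPs over $\gf(q)$. There is essentially no technical obstacle, since all the real work has already been done in Theorem \ref{main-thm2.1}; the present theorem is simply the $m=2$, $A = I_2$ case with both bases coinciding. The only items one might want to verify explicitly are that the formulas in (\ref{eqn-basisexpression}) indeed express the coordinates of $z$ in the basis $\{1,\beta\}$, which is routine linear algebra over $\gf(q)$.
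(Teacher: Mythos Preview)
Your proposal is correct and matches the paper's own argument essentially verbatim: the paper also derives Theorem \ref{thm-ourpp1} directly from Theorem \ref{main-thm2.1} by taking the basis $\{1,\beta\}$ of $\gf(q^2)$ over $\gf(q)$ and reading off the equivalence. The explicit identification $m=2$, $\{\beta_i\}=\{\gamma_i\}=\{1,\beta\}$, $A=I_2$ that you spell out is exactly the intended specialization.
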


As a special case of Theorem \ref{thm-ourpp1}, let $f_1(x)=f_2(x)=f(x)$, where $f(x)$ is a PP over $\gf(q)$. Then this
$g(x)$ gives a PP over $\gf(q^2)$, and the newly obtained PP over $\gf(q^2)$ gives another new PP over $\gf(q^4)$.
By recursively applying this construction, we obtain a PP over $\gf\left(q^{2^i}\right)$ for any integer $i$.

As a demonstration of the generic construction of this section, we consider a few special cases below. We start with the
so-called $p$-polynomials. Let $q=p^m$ for some $m$. A $p$-polynomial $L(x)$ over $\gf(q)$ is of the form
$$
L(x)=\sum_{i=0}^{\ell-1} l_i x^{p^i},
$$
where the coefficients $l_i \in \gf(q)$. It is known that $L(x)$ is a PP over $\gf(q)$ if and only if $L(x)$ has only the
root $0$ in $\gf(q)$ \cite[p. 351]{LN97}. However, this characterization is not really useful for constructing permutation
$q$-polynomials. The following corollary of Theorem \ref{thm-ourpp1} shows that the generic construction of this section
can be employed to construct permutation $p$-polynomials.

\begin{corollary}\label{cor-ourcorr1}
Let $q=p^m$ for some positive integer $m$. Let $\beta \in \gf(q^2) \setminus \gf(q)$. If $f_1(x)$
and $f_2(x)$ are two permutation $p$-polynomials over $\gf(q)$, then the polynomial $F_1(z)$ of 
(\ref{eqn-bignewF11}) is a permutation $p$-polynomial over $\gf(q^2)$.
\end{corollary}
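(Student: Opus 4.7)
The plan is to decompose the claim into two assertions—that $F_1(z)$ permutes $\gf(q^2)$, and that $F_1(z)$ is a $p$-polynomial over $\gf(q^2)$—and to handle them separately. The permutation property is essentially free: a permutation $p$-polynomial is in particular a permutation polynomial, so Theorem \ref{thm-ourpp1}, applied to the given $f_1$ and $f_2$, already yields that $F_1$ is a PP over $\gf(q^2)$. The substance of the corollary therefore lies in the $p$-polynomial claim.

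For that I would exploit the closure properties of the class of $p$-polynomials under sums, scalar multiplication by elements of $\gf(q^2)$, and composition. Since $q = p^m$, the Frobenius $z \mapsto z^q = z^{p^m}$ is itself a $p$-polynomial, so the two inner expressions
\[
x(z)=\frac{\beta^q z - \beta z^q}{\beta^q - \beta}, \qquad y(z)=\frac{z^q - z}{\beta^q - \beta}
\]
are $\gf(q^2)$-linear combinations of $z$ and $z^{p^m}$, hence $p$-polynomials in $z$ with coefficients in $\gf(q^2)$. The elementary identity
\[
\Bigl(\sum_j b_j x^{p^j}\Bigr)^{p^i} = \sum_j b_j^{p^i}\, x^{p^{i+j}}
\]
shows that a composition of two $p$-polynomials is again a $p$-polynomial. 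Applying this with $f_1, f_2$ on the outside and $x(z), y(z)$ on the inside gives that $f_1(x(z))$ and $f_2(y(z))$ are $p$-polynomials; multiplying the latter by $\beta \in \gf(q^2)$ and adding yields $F_1(z) = f_1(x(z)) + \beta f_2(y(z))$ as a $p$-polynomial over $\gf(q^2)$.

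The only mild subtlety I foresee is cosmetic: after formal composition some exponents $p^{i+j}$ may exceed $q^2 - 1$, so to display $F_1$ in the normal form $\sum_{i=0}^{2m-1} c_i z^{p^i}$ one reduces modulo $z^{q^2} - z$, which cycles $z^{p^i}$ to $z^{p^{i-2m}}$ when $i \geq 2m$ and therefore preserves the $p$-polynomial shape. I do not expect any genuine obstacle; the corollary is really a compatibility check between Theorem \ref{thm-ourpp1} and the algebra-closure of $p$-polynomials under composition.
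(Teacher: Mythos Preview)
Your proposal is correct and matches the paper's approach: the paper states the corollary without a separate proof, treating both the permutation claim (immediate from Theorem~\ref{thm-ourpp1}) and the $p$-polynomial shape (immediate from the explicit linearity of the construction) as evident. Your write-up simply spells out the closure-under-composition argument that the paper leaves implicit.
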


As a special case of Corollary \ref{cor-ourcorr1}, let
$$
f_1(x)=ax^{p^{h_1}} \mbox{ and } f_2(x)=ax^{p^{h_2}},
$$
where $a, \, b \in \gf(q)^*$ and $0 \leq h_i \leq m-1$ for each $i$. Then both $f_1(x)$ and $f_2(x)$ are permutation
$p$-polynomials over $\gf(q)$. By Corollary \ref{cor-ourcorr1},
$$
a\left( \frac{\beta^q z - \beta z^q}{\beta^q-\beta} \right)^{p^{h_1}} +
\beta  b \left( \frac{z^q-z}{\beta^q-\beta} \right)^{p^{h_2}}
$$
is a permutation $p$-polynomial over $\gf(q^2)$.

Note that the monomial $ax^u$ is a PP over $\gf(q)$ if and only if $a \neq 0$ and $\gcd(u, q-1)=1$.
 If we choose $f_1(x)$ and $f_2(x)$ in Theorem \ref{thm-ourpp1} as monomials, we obtain the following.

\begin{corollary}\label{cor-ourpp222}
Let $1 \leq u \leq q-1$ and $1 \leq v \leq q-1$ be two integers, and let $\beta \in \gf(q^2) \setminus \gf(q)$. 
Then 
$$\eta \left(\frac{\beta^q z - \beta z^q}{\beta^q-\beta}\right)^u+ \gamma \beta\left(\frac{z^q-z}{\beta^q-\beta}\right)^v$$
is a PP over $\gf(q^2)$ if and only if $\gcd(uv, q-1)=1$, where $\eta \in \gf(q)^*$ and  $\gamma \in \gf(q)^*$.
\end{corollary}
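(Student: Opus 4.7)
The plan is to recognize this corollary as an immediate specialization of Theorem \ref{thm-ourpp1} to monomial building blocks, so essentially no new work is required beyond the standard characterization of permutation monomials.

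First I would set $f_1(x) = \eta x^u$ and $f_2(x) = \gamma x^v$ in Theorem \ref{thm-ourpp1}. With this choice, the polynomial $F_1(z)$ of (\ref{eqn-bignewF11}) becomes exactly the expression appearing in the statement of the corollary, since $x = (\beta^q z - \beta z^q)/(\beta^q - \beta)$ and $y = (z^q - z)/(\beta^q - \beta)$ by the basis expansion (\ref{eqn-basisexpression}). Therefore, by Theorem \ref{thm-ourpp1}, the displayed polynomial is a PP over $\gf(q^2)$ if and only if both $\eta x^u$ and $\gamma x^v$ are PPs over $\gf(q)$.

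Next I would invoke the well-known fact that a monomial $ax^n$ with $a \in \gf(q)^*$ is a PP over $\gf(q)$ if and only if $\gcd(n, q-1) = 1$. Applied to our two polynomials (which have nonzero leading coefficients by assumption), this says the condition is equivalent to $\gcd(u, q-1) = 1$ and $\gcd(v, q-1) = 1$ holding simultaneously.

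Finally, the only remaining step is a trivial number-theoretic observation: $\gcd(u, q-1) = 1$ and $\gcd(v, q-1) = 1$ together are equivalent to $\gcd(uv, q-1) = 1$. Indeed, a prime $p$ divides $\gcd(uv, q-1)$ iff $p \mid q-1$ and $p \mid uv$, which by primality happens iff $p$ divides $\gcd(u, q-1)$ or $\gcd(v, q-1)$. There is no substantive obstacle in this argument; the only minor point to verify carefully is that the chosen $f_1, f_2$ genuinely reproduce the expression in the corollary after substituting the basis expansion, which is a direct computation.
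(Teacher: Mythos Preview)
Your proof is correct and follows exactly the paper's approach: the paper introduces this corollary by noting that $ax^u$ is a PP over $\gf(q)$ iff $a\neq 0$ and $\gcd(u,q-1)=1$, then simply states that choosing $f_1$ and $f_2$ in Theorem~\ref{thm-ourpp1} as monomials yields the corollary. Your write-up merely makes explicit the trivial equivalence $\gcd(u,q-1)=\gcd(v,q-1)=1 \iff \gcd(uv,q-1)=1$, which the paper leaves implicit.
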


Any PP $f(x)$ over $\gf(q)$ can be plugged into the generic construction of this section to obtain a PP
over $\gf(q^2)$. So it is endless to consider all the specific constructions in this paper. However, it would
be interesting to investigate the specific permutation polynomials over $\gf(q^2)$ from the Dickson
permutation polynomials of the first kind over $\gf(q)$, which are defined by
\begin{eqnarray}\label{eqn-1stDP}
D_h(x, a)=\sum_{i=0}^{\lfloor \frac{h}{2} \rfloor} \frac{h}{h-i} \binom{h-i}{i} (-a)^i x^{h-2i},
\end{eqnarray}
where $a \in \gf(q)$ and $h$ is called the {\em order} of the polynomial. It is known that $D_h(x, a)$ is
a PP over $\gf(q)$ if and only if $\gcd(h, q^2-1)=1$. 

As a corollary of Theorem \ref{thm-ourpp1}, we have the following. 

\begin{corollary} 
Let $\beta \in \gf(q^2) \setminus \gf(q)$ and $a \in \gf(q)$. 
Then the following polynomial 
\begin{eqnarray}\label{eqn-1stDPgfq2}
\overline{D}_h(x, a)=\sum_{i=0}^{\lfloor \frac{h}{2} \rfloor} \frac{h}{h-i} \binom{h-i}{i} (-a)^i 
\left[    \left(\frac{\beta^q z - \beta z^q}{\beta^q-\beta}\right)^{h-2i}  + 
             \beta \left(\frac{z^q-z}{\beta^q-\beta}\right)^{h-2i} \right],
\end{eqnarray}
is
a PP over $\gf(q^2)$ if and only if $\gcd(h, q^2-1)=1$. 
\end{corollary}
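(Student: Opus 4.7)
The plan is to recognize that $\overline{D}_h(x,a)$ is exactly an instance of the recursive construction $F_1(z)$ in Theorem \ref{thm-ourpp1}, taking $f_1(x) = f_2(x) = D_h(x,a)$, and then invoke the stated characterization of Dickson polynomials that are PPs over $\gf(q)$.

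First I would separate the two bracketed terms in the summation defining $\overline{D}_h$ to rewrite it as
\[
\overline{D}_h(x,a) = \sum_{i=0}^{\lfloor h/2 \rfloor} \frac{h}{h-i}\binom{h-i}{i}(-a)^i \left(\frac{\beta^q z - \beta z^q}{\beta^q - \beta}\right)^{h-2i} + \beta \sum_{i=0}^{\lfloor h/2 \rfloor} \frac{h}{h-i}\binom{h-i}{i}(-a)^i \left(\frac{z^q - z}{\beta^q - \beta}\right)^{h-2i}.
\]
Writing $z = x_1 + \beta x_2$ with the formulas of (\ref{eqn-basisexpression}), so that $x_1 = (\beta^q z - \beta z^q)/(\beta^q - \beta)$ and $x_2 = (z^q - z)/(\beta^q - \beta)$, the two sums collapse to $D_h(x_1, a)$ and $D_h(x_2, a)$ respectively by the definition (\ref{eqn-1stDP}). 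Hence
\[
\overline{D}_h(x,a) = D_h(x_1, a) + \beta\, D_h(x_2, a),
\]
which is precisely the polynomial $F_1(z)$ from (\ref{eqn-bignewF11}) with $f_1 = f_2 = D_h(\cdot, a)$.

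Next I would apply Theorem \ref{thm-ourpp1} directly: $F_1(z)$ is a PP over $\gf(q^2)$ if and only if both $f_1$ and $f_2$ are PPs over $\gf(q)$. Since here $f_1 = f_2 = D_h(x, a)$, this reduces to the single condition that $D_h(x,a)$ is a PP over $\gf(q)$. By the known characterization recalled just before the corollary's statement, this happens exactly when $\gcd(h, q^2 - 1) = 1$, which gives the claimed equivalence.

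There is no real obstacle here; the argument is a one-line specialization of Theorem \ref{thm-ourpp1}. The only bookkeeping step is verifying that the bracketed expression inside the summation separates cleanly into $D_h(x_1, a) + \beta D_h(x_2, a)$, which is immediate because the coefficients $\frac{h}{h-i}\binom{h-i}{i}(-a)^i$ lie in $\gf(q)$ and factor through the $\gf(q)$-linear decomposition $z \mapsto (x_1, x_2)$.
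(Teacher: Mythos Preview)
Your proposal is correct and matches the paper's approach exactly: the corollary is stated in the paper simply as a direct consequence of Theorem~\ref{thm-ourpp1} with $f_1=f_2=D_h(\cdot,a)$, together with the cited criterion that $D_h(x,a)$ is a PP over $\gf(q)$ if and only if $\gcd(h,q^2-1)=1$. The paper gives no additional argument beyond this.
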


\section{The second recursive construction}\label{sec-2ndc}

The second recursive construction is a variant of Theorem \ref{thm-ourpp1} and is described below.

\begin{theorem}
Let $\beta \in \gf(q^2) \setminus \gf(q)$. 
Let $f_1(x)$ and $f_2(x)$ be  two polynomials over $\gf(q)$. Then
$$
F_2(z):=f_1\left( \frac{\beta^q z - \beta z^q}{\beta^q-\beta} \right) +
\beta f_2 \left( \frac{(\beta^q-1)z - (\beta-1)z^q}{\beta^q-\beta} \right)
$$
is a PP over $\gf(q^2)$ if and only if both $f_1(x)$ and $f_2(x)$ are PPs over $\gf(q)$.
\end{theorem}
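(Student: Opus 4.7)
The plan is to reduce this statement to a direct application of Theorem~\ref{main-thm2.1} with $m=2$, taking advantage of the fact that the two arguments fed to $f_1$ and $f_2$ are both $\gf(q)$-linear combinations of the ``coordinates'' of $z$ in the basis $\{1,\beta\}$.

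First I would set $\beta_1=\gamma_1=1$, $\beta_2=\gamma_2=\beta$, and write $z=z_1+\beta z_2$ with $z_1,z_2\in\gf(q)$, so that by (\ref{eqn-basisexpression}) we have $z_1=(\beta^q z-\beta z^q)/(\beta^q-\beta)$ and $z_2=(z^q-z)/(\beta^q-\beta)$. The first argument of $F_2(z)$ is visibly $z_1$. The task is then to express the second argument
\[
x_2 \;:=\; \frac{(\beta^q-1)z-(\beta-1)z^q}{\beta^q-\beta}
\]
as a $\gf(q)$-linear combination of $z_1$ and $z_2$.

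Substituting $z=z_1+\beta z_2$ and $z^q=z_1+\beta^q z_2$ and expanding, the coefficient of $z_1$ becomes $(\beta^q-1)-(\beta-1)=\beta^q-\beta$, and the coefficient of $z_2$ becomes $(\beta^q-1)\beta-(\beta-1)\beta^q=\beta^q-\beta$, so after dividing by $\beta^q-\beta$ one obtains $x_2=z_1+z_2$. Hence
\[
(x_1,x_2)=(z_1,z_2)\begin{pmatrix}1 & 1\\ 0 & 1\end{pmatrix},
\]
and the $2\times 2$ matrix $A$ displayed above is nonsingular over $\gf(q)$ (its determinant is $1$).

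With this identification, $F_2(z)=\gamma_1 f_1(x_1)+\gamma_2 f_2(x_2)$ fits exactly the hypothesis of Theorem~\ref{main-thm2.1}, so that theorem immediately yields: $F_2$ is a PP over $\gf(q^2)$ if and only if both $f_1$ and $f_2$ are PPs over $\gf(q)$. There is no real obstacle here beyond recognizing the linear change of variables; the only step that is not instantaneous is the brief cancellation showing $x_2=z_1+z_2$, which is the reason the upper-triangular unipotent matrix $A$ appears and the construction genuinely differs from Theorem~\ref{thm-ourpp1}.
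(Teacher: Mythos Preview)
Your proposal is correct and follows exactly the paper's own approach: the paper's proof simply says ``Let $x_1=x$ and $x_2=x+y$. The proof is similar to that of Theorem~\ref{thm-ourpp1} and is omitted,'' which is precisely your identification $x_1=z_1$, $x_2=z_1+z_2$ followed by an appeal to Theorem~\ref{main-thm2.1}. Your write-up is in fact more explicit than the paper's, as you carry out the cancellation showing $x_2=z_1+z_2$ and display the unipotent matrix $A$.
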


\begin{proof}
Let $x_1=x$ and $x_2=x+y$. The proof is similar to that of Theorem \ref{thm-ourpp1} and is omitted.
\end{proof}

This is not only a recursive but also a generic construction, into which any permutation polynomials 
$f_1(x)$ and $f_2(x)$ over 
$\gf(q)$ can be plugged.

\section{The third recursive construction}

Let $\beta \in \gf(q^2) \setminus \gf(q)$. 
Note that $\{\beta+1, \beta\}$ is also a basis over  $\gf(q)$.  Let $f_1(x)$ and $f_2(x)$ be two polynomials over $\gf(q)$. We now define a polynomial $F_3(z)$ over $\gf(q^2)$ by
\begin{eqnarray}\label{eqn-bignewF3}
F_3(z) &=& f_1(x) + \beta (f_1(x)+f_2(y))  \nonumber \\
       &=& (\beta+1)f_1\left( \frac{\beta^q z - \beta z^q}{\beta^q-\beta} \right) + \beta f_2 \left( \frac{z^q-z}{\beta^q-\beta} \right),
\end{eqnarray}
where $z=x+\beta y$,  $x \in \gf(q)$, $y \in \gf(q)$ and $z$ are also related by the expression of (\ref{eqn-basisexpression}).

The following theorem then follows from Theorem \ref{main-thm2.1}.

\begin{theorem}
Let $\beta \in \gf(q^2) \setminus \gf(q)$, and let
$f_1(x)$ and $f_2(x)$ be  two polynomials over $\gf(q)$. Then the polynomial $F_3(z)$ of (\ref{eqn-bignewF3})
is a PP over $\gf(q^2)$ if and only if both $f_1(x)$ and $f_2(x)$ are PPs over $\gf(q)$.
\end{theorem}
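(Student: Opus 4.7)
The plan is to derive this theorem as a direct corollary of Theorem \ref{main-thm2.1} with $m=2$, rather than redoing the two-direction argument from scratch. First I would rewrite $F_3(z)$ in the canonical form used in Theorem \ref{main-thm2.1}. Expanding the definition
$$F_3(z) = f_1(x) + \beta\bigl(f_1(x)+f_2(y)\bigr) = (\beta+1)\,f_1(x) + \beta\, f_2(y),$$
which is already a $\gf(q)$-linear combination of $f_1(x)$ and $f_2(y)$ with coefficients $\gamma_1=\beta+1$ and $\gamma_2=\beta$.

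Next I would set up the data required by Theorem \ref{main-thm2.1}. Take the first basis to be $\{\beta_1,\beta_2\}=\{1,\beta\}$, so that $z=\beta_1 z_1+\beta_2 z_2$ with $z_1=x$ and $z_2=y$, consistent with (\ref{eqn-basisexpression}). Take the matrix $A$ to be the $2\times 2$ identity over $\gf(q)$, so $(x_1,x_2)=(z_1,z_2)A=(x,y)$. Finally take the second basis to be $\{\gamma_1,\gamma_2\}=\{\beta+1,\beta\}$. With this choice the function produced by Theorem \ref{main-thm2.1} is exactly $\gamma_1 f_1(x_1)+\gamma_2 f_2(x_2)=(\beta+1)f_1(x)+\beta f_2(y)=F_3(z)$.

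The only nontrivial verification is that $\{\beta+1,\beta\}$ really is a basis of $\gf(q^2)$ over $\gf(q)$. Since $[\gf(q^2):\gf(q)]=2$, it suffices to show linear independence: if $a(\beta+1)+b\beta=0$ for some $a,b\in\gf(q)$, then $(a+b)\beta+a=0$, which forces $a=0$ (because $\beta\notin\gf(q)$) and then $b=0$. Hence $\{\beta+1,\beta\}$ is indeed a basis.

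With all hypotheses of Theorem \ref{main-thm2.1} in place, I would conclude that $F_3(z)$ is a PP over $\gf(q^2)$ if and only if both $f_1$ and $f_2$ are PPs over $\gf(q)$. There is no real obstacle here; the construction was engineered so that the coefficients $\beta+1$ and $\beta$ form a basis and the change-of-variables matrix is trivial, so the whole proof reduces to the basis check above.
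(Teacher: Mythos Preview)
Your proof is correct and follows exactly the approach the paper intends: the paper simply asserts that the theorem follows from Theorem~\ref{main-thm2.1}, having already noted that $\{\beta+1,\beta\}$ is a basis of $\gf(q^2)$ over $\gf(q)$. Your write-up in fact supplies more detail than the paper does, explicitly identifying the bases, taking $A$ to be the identity, and verifying the linear independence of $\beta+1$ and $\beta$.
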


\begin{example}
Let $q$ be an odd prime power.  Choose an element $b\in\gf(q)$ such that $x^2-b$ is irreducible
over $\gf(q)$. Let $\beta$ be a solution of $x^2-b$ in $\gf(q^2)$ (we view $\gf(q^2)$ as the splitting field of
$x^2-b$ over $\gf(q)$). Let
$$
x_1=a_{11}x+2ba_{12}y \mbox{ and } x_2= a_{21}x+2ba_{22}y,
$$
where $a_{11},\, a_{12},\, a_{21},\, a_{22} \in \gf(q)$ with $a_{11}a_{22}-a_{12}a_{21}\ne 0$.
Note that $\{1, \beta\}$, $\beta \in \gf(r) \setminus \gf(q)$ is a basis of $\gf(q^2)$ over $\gf(q)$.
It then follows from Theorem \ref{main-thm2.1} that
$$\left((a_{11}+a_{12}\beta)z+(a_{11}-a_{12}\beta)z^q\right)^u+\alpha\left((a_{21}+a_{22}\beta)z+(a_{21}-a_{22}\beta)z^q\right)^v, \, \alpha \in \gf(r)\setminus \gf(q)$$
is a PP over $\gf(q^2)$ if and only if $\gcd(uv, q-1)=1$.
\end{example}

\section{The fourth recursive construction}

Throughout this section, let $q$ be a power of $2$ and let $\beta \in \gf(q^2) \setminus \gf(q)$.
Let $f$ be a polynomial over $\gf(q)$. We now define a polynomial $G(z)$ over $\gf(q^2)$ by
\begin{eqnarray}\label{eqn-bignewG}
G(z) &=& f(yx) + \beta y + \left((z+z^q)^{q-1}+1\right)z \nonumber \\
       &=&  f \left[  \left(  \frac{\beta^q z+\beta z^q}{\beta^q+\beta} \right)
              \left( \frac{z+z^q}{\beta^q+\beta} \right)  \right]  + \left((z+z^q)^{q-1}+1\right)z,
\end{eqnarray}
where $z=x+\beta y$,  $x \in \gf(q)$, $y \in \gf(q)$ and $z$ are also related by the expression of (\ref{eqn-basisexpression}).

\begin{theorem}
If $f$ is a permutation polynomial over $\gf(q)$, then the polynomial $G(z)$ of (\ref{eqn-bignewG}) is a PP
over $\gf(q^2)$.
\end{theorem}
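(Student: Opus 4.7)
The plan is to mimic the proof of the earlier construction for $F(z)$, exploiting the fact that in characteristic $2$ the element $\beta+\beta^q$ is nonzero (otherwise $\beta^q=\beta$ would force $\beta\in\gf(q)$). Writing $z=x+\beta y$ with $x,y\in\gf(q)$, I first want to record the two key identities
\[
\frac{\beta^q z+\beta z^q}{\beta^q+\beta}=x,\qquad \frac{z+z^q}{\beta^q+\beta}=y,
\]
which come from $z^q=x+\beta^q y$ and the cancellation $\beta^{q+1}+\beta^{q+1}=0$. This confirms that the displayed two expressions for $G(z)$ agree (up to the $\beta y$ summand) and lets me work with the cleaner first form $G(z)=f(yx)+\beta y+((z+z^q)^{q-1}+1)z$.

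Next I would analyze the ``indicator'' factor $(z+z^q)^{q-1}+1$. Since $z+z^q=(\beta+\beta^q)y\in\gf(q)$, we have $(z+z^q)^{q-1}=1$ when $y\ne 0$ and $0$ when $y=0$, so in characteristic $2$ the factor equals $0$ when $y\ne 0$ and $1$ when $y=0$. Consequently
\[
G(z)=\begin{cases} f(yx)+\beta y & \text{if } y\ne 0,\\ f(0)+x & \text{if } y=0.\end{cases}
\]
This case split, exactly parallel to the $F(z)$ argument, is the core of the proof.

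To finish, I would show injectivity by assuming $G(z_1)=G(z_2)$ with $z_i=x_i+\beta y_i$ and examining the three possible configurations of the $y_i$. If $y_1=y_2=0$, then $f(0)+x_1=f(0)+x_2$ gives $x_1=x_2$. If $y_1=0$ but $y_2\ne 0$, the right side has a nonzero $\beta$-component while the left does not, which is impossible because $\{1,\beta\}$ is a basis of $\gf(q^2)$ over $\gf(q)$. If $y_1,y_2\ne 0$, comparing $\gf(q)$- and $\beta$-components forces $y_1=y_2=:y$ and $f(yx_1)=f(yx_2)$; since $f$ is a PP on $\gf(q)$ we get $yx_1=yx_2$, and then $x_1=x_2$ since $y\ne 0$. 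In every case $z_1=z_2$, so $G$ is injective and hence a PP on $\gf(q^2)$.

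There is no real obstacle here; the only point requiring a little care is the clean handling of the indicator $(z+z^q)^{q-1}+1$ in characteristic $2$, and the recognition that the bijection $x\mapsto yx$ on $\gf(q)$ for fixed $y\ne 0$ is what lets me conclude $x_1=x_2$ from $f(yx_1)=f(yx_2)$. Note that no hypothesis like the o-polynomial property is needed, because here $f$ is applied to $yx$ directly rather than composed with $x^{q-2}$ as in the earlier construction.
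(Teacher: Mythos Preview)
Your proof is correct and follows essentially the same route as the paper's: compute the indicator $(z+z^q)^{q-1}+1$ to split into the cases $y=0$ and $y\ne 0$, then use the basis $\{1,\beta\}$ to compare components and invoke that $x\mapsto f(yx)$ is a permutation of $\gf(q)$ when $y\ne 0$. You are in fact slightly more careful than the paper, correctly recording $G(z)=f(0)+x$ (rather than just $x$) when $y=0$ and treating the mixed case $y_1=0,\ y_2\ne 0$ explicitly.
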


\begin{proof}
Since $q$ is a power of 2, we have that $z+z^q=(\beta+\beta^q)y\in\gf(q)$.
Thus, $z+z^q=0$ if and only if $y=0$. It then follows that $(z+z^q)^{q-1}=1$ when $y\ne0$
and $0$ otherwise. Consequently, $F(z)=f(yx) + \beta y$ when $y \ne 0$ and $F(z)=x$ when $y=0$.

Let $z_1=x_1+\beta y_1$ and  $z_2=x_2+\beta y_2$. Assume that $F(z_1)=F(z_2)$. We first consider the case
that $y_1=0$ and $y_2=0$.  In this case, we have
$$F(z_1)=x_1=x_2=F(z_2).$$
Thus, $x_1=x_2$, and then $z_1=z_2$. Therefore we may assume
that $y_1 \ne 0$. We have
$$F(z_1)=f(y_1x_1) + \beta y_1= F(z_2)=f(y_2x_2) + \beta y_2.$$
As a result, we have $y_2 \ne 0$ and $y_1=y_2=y$ since $\{1, \beta\}$ is a basis of $\gf(q^2)$ over $\gf(q)$. It follows that
\begin{equation}\label{o2}
f(yx_1)=f(yx_2).
\end{equation}
Note that $y \ne 0$.
Since $f(x)$ is a permutation polynomial over $\gf(q)$, so is $f(yx)$. It follows from (\ref{o2}) that $x_1=x_2$.  Hence,  $z_1=z_2$. This proves the theorem.
\end{proof}

Notice that the construction of this section works only for the case that $q$ is a power of 2.

\section{Further constructions of permutation polynomials}

The generic  constructions of permutation polynomials presented in some earlier sections are derived 
from Theorem \ref{main-thm2.1}. In this section, we employ Theorem \ref{main-thm2.1} to get 
more constructions of permutation polynomials over $\gf(q^2)$ and $\gf(q^3)$.

\begin{theorem}\label{thm-today151} 
Let $q$ be an odd prime power, $t$ a positive integer, and let $a, b, u \in \gf(q)$. Then
$$F(z)=az+bz^q+(z+z^q+u)^t$$
is a PP over $\gf(q^2)$ if and only if $a\ne b$ and $(a+b)x+2x^t$ is a PP over $\gf(q)$.
\end{theorem}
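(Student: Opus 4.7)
The plan is to reduce everything to Theorem~\ref{main-thm2.1} with $m=2$. First I would fix the basis $\{1,\beta\}$ of $\gf(q^2)$ over $\gf(q)$ and write $z=z_1+\beta z_2$ with $z_1,z_2\in\gf(q)$, so that $z^q = z_1 + \beta^q z_2$ and $z+z^q = 2z_1 + (\beta+\beta^q)z_2\in\gf(q)$. Substituting into the definition of $F$ yields
$$F(z) = (a+b)z_1 + (a\beta + b\beta^q)z_2 + \bigl(2z_1 + (\beta+\beta^q)z_2 + u\bigr)^t.$$

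Next, exploiting that $q$ is odd (so $2\in\gf(q)^*$), I would perform the $\gf(q)$-linear change of variables $y_1 = z_1 + \tfrac{1}{2}(\beta+\beta^q)z_2$ and $y_2 = z_2$, which is effected by a nonsingular $2\times 2$ matrix over $\gf(q)$. A direct substitution collapses the expression above into
$$F(z) = f_1(y_1) + \gamma\, f_2(y_2),$$
where $f_1(y) = (a+b)y + (2y+u)^t$, $f_2(y)=y$, and $\gamma=\tfrac{1}{2}(a-b)(\beta-\beta^q)$. Since $(\beta-\beta^q)^q = -(\beta-\beta^q)$ and $q$ is odd, $\beta-\beta^q\notin\gf(q)$, so $\{1,\gamma\}$ is a basis of $\gf(q^2)$ over $\gf(q)$ precisely when $a\ne b$.

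For the \emph{if} direction, when $a\ne b$ the display above together with Theorem~\ref{main-thm2.1} gives that $F$ is a PP over $\gf(q^2)$ iff both $f_1$ and $f_2$ are PPs over $\gf(q)$. The polynomial $f_2$ is trivially a PP, and substituting the bijection $v=2y+u$ shows $f_1$ is a PP iff $v\mapsto \tfrac{1}{2}(a+b)v + v^t$ is, equivalently iff $(a+b)x + 2x^t$ is a PP over $\gf(q)$. For the converse, if $a=b$ then $F(z) = a(z+z^q) + (z+z^q+u)^t$ depends only on $z+z^q\in\gf(q)$, so every fibre of $F$ has size at least $q$ and $F$ is not a PP; this forces $a\ne b$ whenever $F$ is a PP, at which point the previous paragraph supplies the second required condition.

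The main obstacle I expect is isolating the right $\gf(q)$-linear change of coordinates that simultaneously diagonalises the linear part $(a+b)z_1 + (a\beta + b\beta^q)z_2$ and isolates the argument $2z_1 + (\beta+\beta^q)z_2 + u$ of the $t$-th power into a single new coordinate; once the translate by $\tfrac{1}{2}(\beta+\beta^q)$ is spotted, the rest is bookkeeping plus the trivial fibre-size argument handling $a=b$.
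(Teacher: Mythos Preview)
Your proof is correct and follows essentially the same route as the paper: express $z$ in $\gf(q)$-coordinates and reduce to Theorem~\ref{main-thm2.1}. The paper streamlines the computation by choosing from the outset the special basis $\{1,\alpha\}$ with $\alpha^q=-\alpha$, so that $z+z^q=2x$ immediately (no change of variables is needed) and the condition $a\ne b$ is absorbed into the requirement that $f_2(y)=(a-b)y$ be a PP rather than handled as a separate fibre-size argument.
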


\begin{proof} 
Since $q$ is an odd prime power, $2(q-1)$ must divide $q^2-1$. Consequently, there exists an element $\alpha\in\gf(q^2)\backslash\gf(q)$ with $\alpha^q=-\alpha$. Thus $\{1, \alpha\}$ is a basis of $\gf(q^2)$ over $\gf(q)$. Put
$$z=x+y\alpha,$$
then $z^q=x-y\alpha$. It follows that
$$F(z)=(a+b)x+(a-b)y\alpha+(2x+u)^t.$$
By Theorem \ref{main-thm2.1}, $F(z)$ is a PP over $\gf(q^2)$ if and only if both $(a-b)x$ and $(a+b)x+(2x+u)^t$ are PPs over $\gf(q)$. It is easily seen that $(a+b)x+(2x+u)^t$ is a PP over $\gf(q)$ if and only if $(a+b)x+2x^t$ is a PP over $\gf(q)$.
The desired conclusions then follow.  
\end{proof}

When $t=1$, $a \ne b$ and $a+b+2 \ne 0$, it follows from Theorem \ref{thm-today151} that 
$F(z)=(a+1)z+(b+1)z^q+u$ is a PP over $\gf(q^2)$ for any $u \in \gf(q)$.

\begin{theorem} 
Let $q$ be an odd prime power and let $t$ a positive integer. Let $\alpha \in \gf(q^2)\backslash\gf(q)$ with 
$\alpha^q=-\alpha$ and let $a, b, u \in \gf(q)$.

\begin{enumerate}
\item When $t$ is even,  
$$F_1(z)=az+bz^q+(z-z^q+u\alpha)^t$$
is a PP over $\gf(q^2)$ if and only if $a^2\ne b^2$. 

\item When $t$ is odd,  
$$F_1(z)=az+bz^q+(z-z^q+u\alpha)^t$$
is a PP over $\gf(q^2)$ if and only if $a+b\ne0$ and $(a-b)x+2x^t\alpha^{t-1}$ is a PP over $\gf(q)$.
\end{enumerate}
\end{theorem}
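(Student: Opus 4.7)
The plan is to mimic the approach of Theorem~\ref{thm-today151}: write $z = x + y\alpha$ with $x, y \in \gf(q)$, and use $\alpha^q = -\alpha$ to get $z^q = x - y\alpha$ and $z - z^q + u\alpha = (2y+u)\alpha$. A direct substitution then yields
$$F_1(z) = (a+b)x + (a-b)y\alpha + (2y+u)^t \alpha^t.$$
Since $\alpha^2 = -\alpha^{q+1} \in \gf(q)$, the element $\alpha^t$ lies in $\gf(q)$ exactly when $t$ is even; otherwise $\alpha^t = \alpha \cdot \alpha^{t-1}$ with $\alpha^{t-1} \in \gf(q)$. This parity dichotomy is what forces the two cases to behave differently, the $t$-th power landing in the $\gf(q)$-coordinate when $t$ is even and in the $\alpha$-coordinate when $t$ is odd.

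For Case~1 (even $t$), I would read off the coordinates of $F_1(z)$ in the basis $\{1, \alpha\}$ as
$$P(x, y) = (a+b)x + \alpha^t(2y+u)^t, \qquad Q(x, y) = (a-b)y,$$
and argue bijectivity of $(x, y) \mapsto (P, Q)$ on $\gf(q)^2$ directly. Since $Q$ depends only on $y$ and is affine in $y$, it is a bijection of $\gf(q)$ in $y$ precisely when $a \ne b$; granted this, $y$ is uniquely determined by $Q$, and the equation for $P$ reduces to an affine map in $x$ that is bijective iff $a + b \ne 0$. Combining these yields the stated condition $a^2 \ne b^2$.

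For Case~2 (odd $t$), the expression rearranges cleanly as
$$F_1(z) = (a+b)x + \left[(a-b)y + \alpha^{t-1}(2y+u)^t\right]\alpha,$$
which fits Theorem~\ref{main-thm2.1} directly with $\beta_1 = \gamma_1 = 1$, $\beta_2 = \gamma_2 = \alpha$, $A = I$, $f_1(x) = (a+b)x$, and $f_2(y) = (a-b)y + \alpha^{t-1}(2y+u)^t$ (noting $\alpha^{t-1} \in \gf(q)$, so $f_2$ has coefficients in $\gf(q)$). Hence $F_1$ is a PP iff both $f_1$ and $f_2$ are PPs of $\gf(q)$, giving the conditions $a + b \ne 0$ plus a PP condition on $f_2$. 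I would simplify the latter by substituting $w = 2y+u$ (a bijection of $\gf(q)$), which converts $f_2$ into $\tfrac{a-b}{2}w + \alpha^{t-1}w^t$ up to an additive constant; multiplying by $2$ (valid since $q$ is odd) produces the equivalent condition that $(a-b)x + 2\alpha^{t-1}x^t$ be a PP of $\gf(q)$, matching the theorem.

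The only delicate point is that in Case~1 Theorem~\ref{main-thm2.1} does not apply verbatim, because the $\gf(q)$-coordinate $P$ mixes both $x$ and $y$; one has to step outside the recursive framework and argue bijectivity by first pinning down $y$ from $Q$ and then $x$ from $P$. Case~2 splits cleanly along the basis $\{1, \alpha\}$ and reduces immediately to the setting of Theorem~\ref{main-thm2.1}, so the remaining work is routine bookkeeping.
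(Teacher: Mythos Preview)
Your proposal is correct and follows essentially the same route as the paper: both write $z=x+y\alpha$, obtain $F_1(z)=(a+b)x+(a-b)y\alpha+(2y+u)^t\alpha^t$, handle the even-$t$ case by a direct triangular bijectivity argument (the paper likewise argues injectivity by hand here rather than invoking Theorem~\ref{main-thm2.1}), and in the odd-$t$ case apply Theorem~\ref{main-thm2.1} to the decoupled coordinates, then simplify via the affine substitution $2y+u\mapsto x$. Your explicit remark that Case~1 falls outside Theorem~\ref{main-thm2.1} because $P$ mixes $x$ and $y$ is a nice clarification that the paper leaves implicit.
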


\begin{proof} 
We first prove the conclusion of the first part. Since $\alpha^q=-\alpha$, $\{1, \alpha\}$ is a basis of $\gf(q^2)$ over $\gf(q)$ and $\alpha^2\in\gf(q)$. 
Put
$$z=x+y\alpha,$$
where $x \in \gf(q)$ and $y \in \gf(q)$. Then $z^q=x-y\alpha$. It follows that
\begin{equation}\label{eq91} 
F_1(z)=(a+b)x+(a-b)y\alpha+(2y+u)^t\alpha^t. 
\end{equation}

If $a+b=0$, then $F_1(x_1+y\alpha)=F_1(x_2+y\alpha)$ for any $x_1, x_2, y\in\gf(q)$. Thus $F_1(z)$ is not a PP over $\gf(q^2)$. 

Now we assume that $a+b \ne 0$. Let 
$$ 
z_1=x_1+y_1\alpha, \, \, z_2=x_2+y_2\alpha, \, \, x_1, \, x_2,\,  y_1,\, y_2\in\gf(q)
$$ 
such that $F_1(z_1)=F_1(z_2)$. 

By (\ref{eq91}) and $\alpha^t\in\gf(q)$, we have
$$(a-b)y_1=(a-b)y_2, \quad (a+b)x_1+(2y_1+u)^t\alpha^t=(a+b)x_2+(2y_2+u)^t\alpha^t.$$
If $a=b$, then for any distinct $y_1, y_2\in\gf(q)$, we can obtain two elements $x_1, x_2\in\gf(q)$ with 
$$(a+b)x_1+(2y_1+u)^t\alpha^t=(a+b)x_2+(2y_2+u)^t\alpha^t.$$Thus $F_1(z)$ is not a PP over $\gf(q^2)$.

If $a \ne b$, then we have $y_1=y_2$ and $(a+b)x_1=(a+b)x_2$, so $x_1=x_2$. It follows that $z_1=z_2$, which implies that $F_1(z)$ is a PP over $\gf(q^2)$ if and only if $a^2\ne b^2$.

Now we turn to prove the conclusion of the second part.. By (\ref{eq91}) and $\alpha^{t-1}\in\gf(q)$, we have
$$(a+b)x_1=(a+b)x_2, \quad (a-b)y_1+(2y_1+u)^t\alpha^{t-1}=(a-b)y_2+(2y_2+u)^t\alpha^{t-1}.$$
Hence by Theorem 2.1, $F(z)$ is a PP over $\gf(q^2)$ if and only if both $(a+b)x$ and $(a-b)x+(2x+u)^t\alpha^{t-1}$ are PPs over $\gf(q)$. It is easy to see that $(a-b)x+(2x+u)^t\alpha^{t-1}$ is a PP over $\gf(q)$ if and only if $(a-b)x+2x^t\alpha^{t-1}$ is a PP over $\gf(q)$. 
The desired conclusions in the second part then follow.  
\end{proof}

Let $q$ be a prime power with $q\equiv1\pmod{3}$. Then $3(q-1)|q^3-1$, so there is an element $\alpha\in\gf(q^3)\backslash\gf(q)$ and $\alpha^3\in\gf(q)$. Let $\alpha^3=b\in\gf(q)$ and $\omega=\alpha^{q-1}=b^{\frac{q-1}{3}}\in\gf(q)$. Then $\omega\ne1$ and $\omega^3=1, 1+\omega+\omega^2=0$, and $\{1, \alpha, \alpha^2\}$ is a basis of $\gf(q^3)$ over $\gf(q)$. Moreover, we have 
$$\alpha^q=\omega\alpha, \quad \alpha^{q^2}=\omega^2\alpha.$$

Similarly, we have the following. 

\begin{theorem} 
Let $q$ be a prime power with $q\equiv1\pmod{3}$ and let $t$ be a positive integer. Let $a, b, c, u\in\gf(q)$. Then 
$$F(x)=ax+bx^q+cx^{q^2}+(x+x^q+x^{q^2}+u)^t$$
is a PP over $\gf(q^3)$ if and only if $(a+b\omega+c\omega^2)(a+b\omega^2+c\omega)\ne0$ and $(a+b+c)x+3x^t$ is a PP over $\gf(q)$.
\end{theorem}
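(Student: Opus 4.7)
The plan is to apply Theorem \ref{main-thm2.1} with $m=3$ to the basis $\{1,\alpha,\alpha^2\}$ of $\gf(q^3)$ over $\gf(q)$ (taking the same basis on both sides and the identity change-of-basis matrix $A=I$). Writing $z=x_1+x_2\alpha+x_3\alpha^2$ with $x_i\in\gf(q)$, and using the given identities $\alpha^q=\omega\alpha$ and $\alpha^{q^2}=\omega^2\alpha$, I would first compute
$$z^q = x_1+\omega x_2\alpha+\omega^2 x_3\alpha^2,\qquad z^{q^2}=x_1+\omega^2 x_2\alpha+\omega x_3\alpha^2.$$

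Next, using $1+\omega+\omega^2=0$, the inner trace collapses to $z+z^q+z^{q^2}=3x_1\in\gf(q)$, so $(z+z^q+z^{q^2}+u)^t=(3x_1+u)^t\in\gf(q)$. Combining with the linear part gives
$$F(z)=\bigl[(a+b+c)x_1+(3x_1+u)^t\bigr]\cdot 1+(a+b\omega+c\omega^2)\,x_2\cdot\alpha+(a+b\omega^2+c\omega)\,x_3\cdot\alpha^2.$$
Because $\omega=b^{(q-1)/3}\in\gf(q)$, all three coefficient functions take values in $\gf(q)$, so Theorem \ref{main-thm2.1} applies and $F$ is a PP over $\gf(q^3)$ iff each of
$$f_1(x)=(a+b+c)x+(3x+u)^t,\quad f_2(x)=(a+b\omega+c\omega^2)\,x,\quad f_3(x)=(a+b\omega^2+c\omega)\,x$$
is a PP over $\gf(q)$.

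Finally, $f_2$ and $f_3$ are PPs iff their coefficients are nonzero, which is exactly $(a+b\omega+c\omega^2)(a+b\omega^2+c\omega)\ne 0$. For $f_1$, since $q\equiv 1\pmod{3}$ implies $3\ne 0$ in $\gf(q)$, the invertible substitution $x=(s-u)/3$ followed by scaling by $3$ shows $f_1$ is a PP iff $(a+b+c)s+3s^t$ is a PP over $\gf(q)$, which is the stated condition. The whole argument is essentially bookkeeping; the only subtle step is recognizing that $\omega\in\gf(q)$ (so that the $\alpha$- and $\alpha^2$-coefficients are genuine $\gf(q)$-polynomials in $x_2,x_3$), and the minor affine reduction of $f_1$ to the form displayed in the statement.
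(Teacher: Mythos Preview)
Your proof is correct and follows essentially the same approach as the paper: write $z=x_1+x_2\alpha+x_3\alpha^2$, use $\alpha^q=\omega\alpha$ and $\alpha^{q^2}=\omega^2\alpha$ to decompose $F(z)$ along the basis $\{1,\alpha,\alpha^2\}$, and then invoke Theorem~\ref{main-thm2.1} to reduce to the three univariate conditions. Your write-up is in fact a bit more explicit than the paper's (you spell out the Frobenius images, note that $3\ne 0$ since $q\equiv 1\pmod 3$ forces $p\ne 3$, and detail the affine reduction of $f_1$), but the substance is identical.
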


\begin{proof} 
Let $x=x_1+x_2\alpha+x_3\alpha^2$. Since $\alpha^q=\omega\alpha$ and $\alpha^{q^2}=\omega^2\alpha$, we have
$$F(x)=(a+b+c)x_1+(a+b\omega+c\omega^2)x_2\alpha+(a+b\omega^2+c\omega)x_3\alpha^2+(3x_1+u)^t.$$
By Theorem \ref{main-thm2.1}, $F(x)$ is a PP over $\gf(q^3)$ if and only if $(a+b\omega+c\omega^2)x$, $(a+b\omega^2+c\omega)x$ and $(a+b+c)x+(3x+u)^t$ are PPs over $\gf(q)$. It is easily seen that $(a+b+c)x+(3x+u)^t$ is a PP over $\gf(q)$ 
if and only if $(a+b+c)x+3x^t$ is a PP over $\gf(q)$. The desired conclusions then follow. 
\end{proof}

Let symbols and notations be the same as above. We have then the following. 

\begin{theorem} 
Let $q$ be a prime power with $q\equiv1\pmod{3}$ and let $t$ be a positive integer. Let $a, b, c, u\in\gf(q)$ and define 
$$
F_1(x)=ax+bx^q+cx^{q^2}+(x+\omega x^q+\omega^2x^{q^2}+u\alpha^2)^t.
$$ 
\begin{enumerate}
\item When $t\not\equiv1\pmod{3}$, $F_1(x)$ is a PP over $\gf(q^3)$ if and only if $(a+b+c)(a+b\omega+c\omega^2)(a+b\omega^2+c\omega)\ne0$. 
\item When $t\equiv1\pmod{3}$, $F_1(x)$
is a PP over $\gf(q^3)$ if and only if $(a+b+c)(a+b\omega^2+c\omega)\ne0$ and $(a+b\omega^2+c\omega)x+\alpha^{2(t-1)}3x^t$ is a PP over $\gf(q)$.
\end{enumerate} 
\end{theorem}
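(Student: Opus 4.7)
The plan is to follow the approach of the preceding theorem: write every input in the basis $\{1,\alpha,\alpha^2\}$ of $\gf(q^3)/\gf(q)$, compute the three $\gf(q)$-coordinates of $F_1(x)$, and then invoke Theorem~\ref{main-thm2.1}. Writing $x=x_1+x_2\alpha+x_3\alpha^2$ and using $\alpha^q=\omega\alpha$, $\alpha^{q^2}=\omega^2\alpha$, one finds
$$ax+bx^q+cx^{q^2}=(a+b+c)x_1+(a+b\omega+c\omega^2)x_2\,\alpha+(a+b\omega^2+c\omega)x_3\,\alpha^2.$$
The decisive simplification is that $1+\omega+\omega^2=0$ makes the combination inside the $t$-th power collapse onto a single coordinate: a direct computation gives $x+\omega x^q+\omega^2 x^{q^2}=3x_3\alpha^2$, so
$$\bigl(x+\omega x^q+\omega^2 x^{q^2}+u\alpha^2\bigr)^t=(3x_3+u)^t\alpha^{2t}$$
depends only on $x_3$.

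Next I would split on the residue of $t\pmod 3$, which determines where $\alpha^{2t}$ sends the nonlinear term. If $t\equiv 0\pmod 3$, then $\alpha^{2t}\in\gf(q)$ and the nonlinear term enters the $1$-coordinate; if $t\equiv 2\pmod 3$, then $\alpha^{2t}=\alpha^{2t-1}\cdot\alpha$ with $\alpha^{2t-1}\in\gf(q)$, so it enters the $\alpha$-coordinate; if $t\equiv 1\pmod 3$, then $\alpha^{2t}=\alpha^{2(t-1)}\cdot\alpha^2$ with $\alpha^{2(t-1)}\in\gf(q)$, so it enters the $\alpha^2$-coordinate. In the first two cases the contaminated coordinate has the form (scalar)$\cdot x_j+$(function of $x_3$ alone), while the $\alpha^2$-coordinate remains the plain scalar $(a+b\omega^2+c\omega)x_3$; inverting $F_1$ is then a triangular problem (first recover $x_3$ from the $\alpha^2$-coordinate, then $x_1$ and $x_2$ from the remaining two), so $F_1$ is a PP if and only if each of the three scalars $a+b+c$, $a+b\omega+c\omega^2$, $a+b\omega^2+c\omega$ is nonzero. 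This gives part~(1).

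For the case $t\equiv 1\pmod 3$, the three $\gf(q)$-coordinates become $(a+b+c)x_1$, $(a+b\omega+c\omega^2)x_2$, and $(a+b\omega^2+c\omega)x_3+\alpha^{2(t-1)}(3x_3+u)^t$, each depending on exactly one of the variables. Theorem~\ref{main-thm2.1} now applies verbatim and reduces the PP question to three one-variable PP conditions over $\gf(q)$: the nonvanishing of $a+b+c$, the nonvanishing of $a+b\omega+c\omega^2$, and that $(a+b\omega^2+c\omega)x+\alpha^{2(t-1)}(3x+u)^t$ be a PP over $\gf(q)$. The substitution $x\mapsto(x-u)/3$ reduces the third polynomial to $(a+b\omega^2+c\omega)x+3\alpha^{2(t-1)}x^t$ (up to an irrelevant affine change), which is the form appearing in part~(2). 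The main obstacle is purely bookkeeping: one must correctly identify into which of the three coordinates the term $(3x_3+u)^t\alpha^{2t}$ is absorbed in each residue class, and, for part~(1), recognize that Theorem~\ref{main-thm2.1} does not literally apply to the contaminated coordinate and must be supplemented by the triangular inversion argument above.
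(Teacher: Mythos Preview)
Your proof matches the paper's almost line for line: the same basis decomposition, the same reduction of $x+\omega x^q+\omega^2 x^{q^2}$ to $3x_3\alpha^2$, the same case split on $t\bmod 3$ with a triangular injectivity argument for part~(1) and an appeal to Theorem~\ref{main-thm2.1} for part~(2). One discrepancy worth flagging: in part~(2) you correctly derive the additional condition $a+b\omega+c\omega^2\ne 0$, which the paper's own proof also obtains (``$(a+b\omega+c\omega^2)x$ \ldots\ are PPs over $\gf(q)$'') but then silently drops in its ``that is'' summary and in the theorem statement---your three-condition version is the accurate one.
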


\begin{proof} 
Let $x=x_1+x_2\alpha+x_3\alpha^2$. Since $\alpha^q=\omega\alpha$ and $\alpha^{q^2}=\omega^2\alpha$, we have
$$F(x)=(a+b+c)x_1+(a+b\omega+c\omega^2)x_2\alpha+(a+b\omega^2+c\omega)x_3\alpha^2+(3x_3\alpha^2+u\alpha^2)^t.$$
Assume now that  $t\not\equiv1\pmod{3}$. We have then $\alpha^{2t}\ne d\alpha^2$ for any $d\in\gf(q)$. Let $x=x_1+x_2\alpha+x_3\alpha^2$ and $y=y_1+y_2\alpha+y_3\alpha^2$, $x_i, \, y_i \in \gf(q),\, i=1, 2, 3$ such that $F_1(x)=F_1(y)$. 

When $t \equiv 0 \pmod{3}$, we have 
\begin{eqnarray*}
\left\{ \begin{array}{l}
(a+b\omega+c\omega^2)x_2=(a+b\omega+c\omega^2)y_2, \\
(a+b\omega^2+c\omega)x_3=(a+b\omega^2+c\omega)y_3, \\
(a+b+c)x_1+(3x_3+u)^t\alpha^{2t}=(a+b+c)y_1+(3y_3+u)^t\alpha^{2t}.
\end{array}
\right. 
\end{eqnarray*} 
We then conclude that $x=y$ if and only if $(a+b+c)(a+b\omega+c\omega^2)(a+b\omega^2+c\omega)\ne0$; 
that is, $F_1(x)$ is a PP over $\gf(q^3)$ if and only if $(a+b+c)(a+b\omega+c\omega^2)(a+b\omega^2+c\omega) \ne 0$.

When $t\equiv2\pmod{3}$, we have 
\begin{eqnarray*}
\left\{ 
\begin{array}{l}
(a+b+c)x_1=(a+b+c)y_1,  \\
(a+b\omega^2+c\omega)x_3=(a+b\omega^2+c\omega)y_3, \\ 
(a+b\omega+c\omega^2)x_2+(3x_3+u)^t\alpha^{2t-1}=(a+b\omega+c\omega^2)y_2+(3y_3+u)^t\alpha^{2t-1}.
\end{array}
\right. 
\end{eqnarray*} 
We deduce that $x=y$ if and only if $(a+b+c)(a+b\omega+c\omega^2)(a+b\omega^2+c\omega)\ne0$; 
that is, $F_1(x)$ is a PP over $\gf(q^3)$ if and only if $(a+b+c)(a+b\omega+c\omega^2)(a+b\omega^2+c\omega) \ne 0$.

When $t \equiv 1\pmod{3}$, we have 
\begin{eqnarray*}
\left\{ \begin{array}{l}
(a+b+c)x_1=(a+b+c)y_1, \\ 
(a+b\omega+c\omega^2)x_2=(a+b\omega+c\omega^2)y_2, \\ 
(a+b\omega^2+c\omega)x_3+(3x_3+u)^t\alpha^{2t-2}=(a+b\omega^2+c\omega)y_3+(3y_3+u)^t\alpha^{2t-2}.
\end{array}
\right. 
\end{eqnarray*} 
By Theorem \ref{main-thm2.1},  $F(x)$ is a PP over $\gf(q^3)$ if and only if $(a+b\omega+c\omega^2)x$, $(a+b+c)x$ and $(a+b\omega^2+c\omega)x+(3x+u)^t\alpha^{2t-2}$ are PPs over $\gf(q)$; that is,  $(a+b+c)(a+b\omega^2+c\omega)\ne0$ and $(a+b\omega^2+c\omega)x+ \alpha^{2(t-1)}(3x+u)^t$ is a PP over $\gf(q)$. It is straightforward to see that 
$(a+b\omega^2+c\omega)x+ \alpha^{2(t-1)}(3x+u)^t$ is a PP over $\gf(q)$ if and only if 
$(a+b\omega^2+c\omega)x+ \alpha^{2(t-1)}3x^t$ is a PP over $\gf(q)$. The desired conclusion then follows. \end{proof}

\section{Concluding remarks} 

The contributions of this paper are the four recursive constructions of permutation polynomials over $\gf(q^2)$ with 
permutation polynomials over $\gf(q)$, and the construction of permutation polynomials over $\gf(2^{2m})$ with 
o-polynomials over $\gf(2^m)$. The five generic constructions give infinitely many new permutation polynomials. 

Although there are a number of references on o-polynomials, our coverage of o-polynomials in this paper contains some 
extensions of known families of o-polynomials. The reader is invited to settle the two conjectures on the extended families 
of o-polynomials.

\end{document}